\documentclass[preprint,12pt,authoryear]{elsarticle}

\usepackage{amssymb}
\usepackage{amsmath}
\usepackage{amsthm}
\usepackage{graphicx}
\usepackage{natbib}
\usepackage{hyperref}
\usepackage{geometry}
\usepackage{memhfixc}
\usepackage{tikz}
\usepackage{circuitikz}
\usetikzlibrary{arrows}
\usetikzlibrary{positioning}
\usetikzlibrary{automata, arrows.meta, positioning}
\usepackage{setspace}
\usepackage{xcolor} %colors
\usepackage{caption} %subfigure
\usepackage{subcaption}  %subfigure
\usepackage{booktabs} %tables
\usepackage{multirow} %tables
\usepackage{tablefootnote} %table footnote
\usepackage{float}
\usepackage{algorithm}
\usepackage{algpseudocode}
\definecolor{universityred}{RGB}{173,0,0} %red
\definecolor{secondarygray}{RGB}{139,157,161} 
\definecolor{universityblue}{RGB}{0,78,116}
\definecolor{lightgray}{RGB}{180, 180, 180}


\newtheorem{theorem}{Theorem}
\newtheorem{lemma}{Lemma}
\newtheorem{proposition}{Proposition}

\def\bs{\boldsymbol}
\def\th{\boldsymbol{\theta}}

\journal{Computational Statistics \& Data Analysis}

\begin{document}

\begin{frontmatter}

%% Title, authors and addresses

%% use the tnoteref command within \title for footnotes;
%% use the tnotetext command for theassociated footnote;
%% use the fnref command within \author or \affiliation for footnotes;
%% use the fntext command for theassociated footnote;
%% use the corref command within \author for corresponding author footnotes;
%% use the cortext command for theassociated footnote;
%% use the ead command for the email address,
%% and the form \ead[url] for the home page:
%% \title{Title\tnoteref{label1}}
%% \tnotetext[label1]{}
%% \author{Name\corref{cor1}\fnref{label2}}
%% \ead{email address}
%% \ead[url]{home page}
%% \fntext[label2]{}
%% \cortext[cor1]{}
%% \affiliation{organization={},
%%            addressline={}, 
%%            city={},
%%            postcode={}, 
%%            state={},
%%            country={}}
%% \fntext[label3]{}

\title{Empirical-Bayes Elastic-Net Computation for Exponential Random Graph Models}

%% use optional labels to link authors explicitly to addresses:
%% \author[label1,label2]{}
%% \affiliation[label1]{organization={},
%%             addressline={},
%%             city={},
%%             postcode={},
%%             state={},
%%             country={}}
%%
%% \affiliation[label2]{organization={},
%%             addressline={},
%%             city={},
%%             postcode={},
%%             state={},
%%             country={}}
\author[label1]{Dan Han}
\author[label2]{Vicki Modisette}
\author[label3]{Ting Li}
\author[label1]{Akidul Haque}

\affiliation[label1]{organization={Department of Mathematics, University of Louisville},%Department and Organization
            addressline={2301 S 3rd St.}, 
            city={Louisville},
            postcode={40292}, 
            state={Kentucky},
            country={United States}}

\affiliation[label2]{organization={Department of Mathematics, Eastern Kentucky University},%Department and Organization
            city={Richmond},
            state={Kentucky},
            country={United States}}

\affiliation[label3]{organization={Department of Computer Science, Emory University},%Department and Organization
            city={Atlanta},
            state={Georgia},
            country={United States}}

\begin{abstract}
Exponential random graph models (ERGMs) describe dependence among network ties, but inference becomes difficult when the likelihood is intractable and candidate network statistics are strongly correlated. We introduce BERGM Elastic Net, an adaptive empirical-Bayes approach that combines lasso shrinkage with ridge stabilization in a Bayesian ERGM. A latent-variable formulation supports approximate exchange sampling, while empirical-Bayes updates adapt the amount of regularization to the observed network. We connect the proposed prior to elastic-net penalized likelihood and clarify the interpretation of thresholded reporting and coefficient grouping. The method is developed for over-specified network models containing many related structural and covariate effects.

\end{abstract}

%%Graphical abstract
%\begin{graphicalabstract}
%\includegraphics{grabs}
%\end{graphicalabstract}

%%Research highlights
%\begin{highlights}
%\item Research highlight 1
%\item Research highlight 2
%\end{highlights}

\begin{keyword}
%% keywords here, in the form: keyword \sep keyword

%% PACS codes here, in the form: \PACS code \sep code

%% MSC codes here, in the form: \MSC code \sep code
%% or \MSC[2008] code \sep code (2000 is the default)

 Exponential random graph model \sep Bayesian computation \sep Elastic net \sep Approximate exchange algorithm \sep Monte Carlo EM \sep Network analysis
\end{keyword}

\end{frontmatter}

%% \linenumbers

%% main text

\section{Introduction}\label{sec:intro}
Network data arise whenever scientific units are meaningfully connected: students nominate friends, proteins interact, individuals transmit infection, firms trade, and institutions exchange resources. In such settings, the primary object of inference is not merely a collection of independent observations, but a system of relational dependencies. The presence of one tie can alter the probability of other ties through mechanisms such as reciprocity, transitivity, degree heterogeneity, and homophily. This dependence is precisely what makes network data scientifically rich, but it also makes standard regression or independent-dyad models inadequate for many inferential questions \citep{newman2003structure,zhang2019social}.

Exponential random graph models (ERGMs) provide a principled likelihood-based framework for modeling these dependencies. Originating from log-linear and Markov random graph formulations \citep{FrankStrauss1986,WassermanPattison1996}, ERGMs assign probabilities to entire networks through a vector of sufficient statistics, such as the number of edges, shared partners, triangles, \(k\)-stars, and nodal-attribute mixing effects. This representation is attractive because each coefficient has a direct interpretation as the conditional contribution of a local network feature to tie formation. Modern ERGM methodology has substantially expanded the class of stable and interpretable specifications, including curved and geometrically weighted terms designed to capture transitivity and degree heterogeneity without relying solely on unstable raw triangle and star counts \citep{Snijders2006,Robins2007intro,Hunter2008ergm,lusher2013}.

Despite this progress, ERGM inference remains statistically and computationally challenging. The likelihood contains a normalizing constant that sums over all possible networks on the node set, making exact likelihood evaluation infeasible except in very small or highly restricted cases. Frequentist methods therefore rely on pseudolikelihood approximations or Monte Carlo maximum likelihood procedures \citep{strauss1990pseudolikelihood,Snijders2002}. Bayesian ERGM methods face the same intractable normalizing constant and have motivated exchange algorithms, approximate exchange samplers, and software implementations such as \texttt{Bergm} \citep{CaimoFriel2011,Jin2013,CaimoFriel2014}. In addition to computational intractability, ERGMs can exhibit near-degeneracy, sensitivity, and practical nonidentifiability, especially when dependence terms compete to explain the same structural pattern \citep{Handcock2003,Snijders2006,ChatterjeeDiaconis2013}.

A second, increasingly important difficulty is model specification. Applied ERGM analyses often begin with a large set of plausible endogenous statistics and exogenous covariate effects. For example, a model may include density, homophily, transitivity, degree terms, attribute effects, and many nodal or dyadic covariates. These candidate statistics are rarely orthogonal. Edges, degree effects, shared-partner terms, and triangles may be strongly correlated; nodal covariates may also be collinear when they measure related social or biological attributes. In such settings, unregularized ERGM estimates can have high variance, weak interpretability, and poor stability across candidate specifications. Model selection is therefore not a peripheral issue, but a central part of reliable ERGM inference.

Penalized estimation offers a natural route to this problem. In ordinary regression, the lasso encourages sparsity by shrinking some coefficients to zero \cite{tibshirani1996regression}, ridge regression stabilizes estimates under collinearity \citep{hoerl1970ridge}, and the elastic net combines both effects \citep{zou2005regularization}. Recent work has begun to translate these ideas into ERGM settings. Lasso-type ERGM procedures have been proposed for variable selection and ranking of candidate network statistics \citep{ButtazzoKauermann2026}, and hierarchical Bayesian adaptive-lasso and ridge ERGMs have been developed to perform variable selection while making posterior inference \citep{modisette2023penalized, Han2024adaptiveLasso}. These contributions demonstrate the value of penalization for high-dimensional network models, but they also reveal a limitation of purely \(L_1\)-based shrinkage: when active statistics or covariates are strongly correlated, lasso-type methods may select one representative from the group and suppress the others. Ridge-type shrinkage reduces this instability but does not by itself yield sparse model selection.

To address this gap, we develop a hierarchical elastic-net BERGM with an
explicit adaptive empirical-Bayes implementation. The proposed zero-centered
prior combines an \(L_1\) component, which pulls weak ERGM coefficients toward
zero, with an \(L_2\) component, which stabilizes estimation along correlated
coefficient directions. The latent representation connects the elastic-net
penalized full-likelihood estimator to a posterior mode, supplies conditionally
Gaussian prior components, introduces generalized inverse Gaussian local-scale
updates, and leads to Monte Carlo EM updates for the two penalty parameters.

The paper is positioned as a computational-statistics contribution. First, we
derive the proper elastic-net prior, its latent representation, posterior
propriety, and the Monte Carlo M-step. Second, we state an invariance theorem for
the ideal fixed-penalty population exchange--Gibbs kernel and a conditional
perturbation bound describing how finite auxiliary ERGM simulation affects that
ideal kernel. Third, we provide a decision-theoretic interpretation of the
thresholded reporting rule and a full-likelihood MAP coefficient-closeness
bound for related statistics. Fourth, we evaluate the complete implementation
in a reproducible correlated-signal simulation and illustrate it on the
\texttt{faux.magnolia.high} friendship network and an OpenAlex AI citation
subgraph.

The remainder of the paper is organized as follows.
Section~\ref{sec:model} introduces the ERGM likelihood, the elastic-net prior,
its latent representation, and the empirical-Bayes-motivated penalty updates.
Section~\ref{sec:algorithms} presents the approximate exchange--Gibbs
implementation and the fixed-penalty kernel analysis.
Section~\ref{sec:theory} develops the decision-theoretic reporting rule and the
pseudolikelihood and full-likelihood grouping results.
Section~\ref{sec:simulation} reports the
simulation study, and Section~\ref{sec:applications} presents the two network
applications. Section~\ref{sec:discussion} concludes with limitations and
directions for further work.

\section{Model Formulation}\label{sec:model}
\subsection{ERGMs and Elastic-Net Penalization}
Let \(Y\) denote a random network on \(n\) nodes, represented by an adjacency matrix \(Y=(Y_{ij})\), where \(Y_{ij}=1\) indicates the presence of a tie and \(Y_{ij}=0\) otherwise. For an undirected loopless network, the dyad set is \(\mathcal D=\{(i,j):i<j\}\); for a directed loopless network it is \(\mathcal D=\{(i,j):i\ne j\}\). The observed network is denoted by \(y\). For a vector of sufficient network statistics \(s(y)\in\mathbb R^p\), the ERGM likelihood has exponential-family form \citep{lusher2013}
\begin{equation}
 \pi (y \mid \th) = \frac{\exp\{\th^\top s(y)\}}{z(\th)},
 \qquad
 z(\th)=\sum_{y'\in\mathcal Y}\exp\{\th^\top s(y')\},
\end{equation}
where \(\mathcal Y\) is the set of all networks on the same node set. The log-likelihood is
\begin{equation}
  l(\th\mid y)=\th^\top s(y)-\log z(\th).
\end{equation}
Positive components of \(\th\) increase the probability of networks exhibiting the corresponding statistics, while negative components suppress those features.

When the sufficient statistics are highly correlated, the unpenalized likelihood may yield high-variance estimates. We therefore consider the elastic-net penalized ERGM estimator
\begin{equation}\label{prior}
\hat{\th}_{EN}
=
\arg\max_{\th}
\left\{l(\th\mid y)-P(\th)\right\},
\end{equation}
where
\[
P(\th)=\lambda_1\|\th\|_1+\lambda_2\|\th\|_2^2
=
\sum_{j=1}^p \lambda_1|\theta_j|+\lambda_2\theta_j^2,
\qquad \lambda_1,\lambda_2>0.
\]
The \(L_1\) component encourages shrinkage and sparse posterior modes, but the continuous prior does not create posterior point masses at zero. The \(L_2\) component stabilizes estimates and encourages coefficient similarity along correlated directions. Substitution into \eqref{prior} gives
\begin{equation}\label{likelihood}
\hat{\th}_{EN}
=
\arg\max_{\th}
\left[
\th^\top s(y)-\log z(\th)
-\sum_{j=1}^p\{\lambda_1|\theta_j|+\lambda_2\theta_j^2\}
\right].
\end{equation}

\subsection{Pseudolikelihood Representation}
For dyad-dependent ERGMs, the normalizing constant \(z(\th)\) is typically unavailable except for very small networks. The pseudolikelihood of \citet{strauss1990pseudolikelihood} replaces the full likelihood by the product of conditional dyad probabilities,
\begin{equation}
    PL(\th)= \prod_{(i,j)\in\mathcal D} P(Y_{ij}=y_{ij}\mid Y_{-ij}=y_{ij}^{c}, \th).
\end{equation}
For binary dyads this becomes
\begin{equation}
   PL(\th)= \prod_{(i,j)\in\mathcal D}
   P(Y_{ij}=1\mid y_{ij}^{c}, \th)^{y_{ij}}
   P(Y_{ij}=0\mid y_{ij}^{c}, \th)^{1-y_{ij}} .
\end{equation}
Let \(y_{ij}^{+}\) and \(y_{ij}^{-}\) denote the networks obtained by setting \(y_{ij}=1\) and \(y_{ij}=0\), respectively, and define the change statistic
\[
\Delta s_{ij}(y)=s(y_{ij}^{+})-s(y_{ij}^{-}).
\]
The conditional log-odds for dyad \((i,j)\) are
\begin{equation}\label{conditional log-odds}
    \log\left\{
    \frac{P(Y_{ij}=1\mid Y_{-ij}=y_{ij}^{c},\th)}
    {P(Y_{ij}=0\mid Y_{-ij}=y_{ij}^{c},\th)}
    \right\}
    =
    \th^\top\Delta s_{ij}(y).
\end{equation}
Thus
\begin{align}\label{logit}
P(Y_{ij}=1\mid y_{ij}^{c},\th)
&=
\frac{\exp\{\th^\top\Delta s_{ij}(y)\}}
{1+\exp\{\th^\top\Delta s_{ij}(y)\}}.
\end{align}
Taking logarithms yields the logistic-regression form
\begin{align}\label{likelihood of pseudolikelihood}
    \log PL(\th)
    =
    \sum_{(i,j)\in\mathcal D}
    \left[
    y_{ij}\th^\top\Delta s_{ij}(y)
    -
    \log\{1+\exp(\th^\top\Delta s_{ij}(y))\}
    \right].
\end{align}
Consequently, elastic-net penalized pseudolikelihood estimation is equivalent to elastic-net penalized logistic regression with design vectors \(\Delta s_{ij}(y)\). This equivalence is useful for motivation, screening, and initialization. The posterior sampler developed below does not target \(PL(\th)\); it targets the full ERGM likelihood in \eqref{posterior}.

\subsection{Bayesian Formulation}
With prior density \(\pi(\th)\), the ERGM posterior is
\begin{equation}\label{posterior}
    \pi(\th \mid y)
    =
    \frac{\pi(y\mid\th)\pi(\th)}{\pi(y)}
    =
    \frac{\exp\{\th^\top s(y)\}}{z(\th)}
    \frac{\pi(\th)}{\pi(y)}.
\end{equation}
This posterior is doubly intractable: the likelihood normalizing constant \(z(\th)\) and the model evidence \(\pi(y)\) are both unavailable in closed form \citep{lusher2013}. The model evidence is harmless for ratios, but the likelihood normalizing constant changes with \(\th\), so a direct Metropolis--Hastings update cannot be used without further work.

The obstruction appears in the ordinary Metropolis--Hastings ratio. To sample from
$
\pi(\th\mid \bs y)
\propto
\displaystyle\frac{\exp\{\th^\top s(\bs y)\}}{z(\th)}\pi(\th),
$
suppose the current state is \(\th\), and let
\(\th'\sim q(\cdot\mid \th)\) be a proposed value. The usual acceptance ratio is
\begin{equation}\label{eq:ordinary_MH_ratio}
R(\th,\th')
=
\frac{
\exp\{\th'^\top s(\bs y)\}z(\th')^{-1}\pi(\th')q(\th\mid \th')
}{
\exp\{\th^\top s(\bs y)\}z(\th)^{-1}\pi(\th)q(\th'\mid \th)
}.
\end{equation}
Equivalently,
\begin{equation}\label{eq:ordinary_MH_ratio_simplified}
R(\th,\th')
=
\exp\{(\th'-\th)^\top s(\bs y)\}
\frac{z(\th)}{z(\th')}
\frac{\pi(\th')}{\pi(\th)}
\frac{q(\th\mid \th')}{q(\th'\mid \th)}.
\end{equation}
The proposed value \(\th'\) is accepted with probability
\[
\alpha(\th,\th')=\min\{1,R(\th,\th')\}.
\]
The ratio \(z(\th)/z(\th')\) in
\eqref{eq:ordinary_MH_ratio_simplified} is the computational obstacle.

To avoid this intractable ratio, one can use the exchange algorithm
\citep{moller2006efficient,murray2006mcmc}.
After proposing \(\th'\sim q(\cdot\mid \th)\), an auxiliary network
\(\bs y'\) is generated from the ERGM likelihood at the proposed parameter,
\[
\bs y' \sim \pi(\cdot\mid \th')
=
\frac{\exp\{\th'^\top s(\bs y')\}}{z(\th')}.
\]
The exchange acceptance ratio is then
\begin{align}
R_{\mathrm{ex}}(\th,\th')
&=
\frac{
\exp\{\th'^\top s(\bs y)\}\pi(\th')q(\th\mid \th')
}{
\exp\{\th^\top s(\bs y)\}\pi(\th)q(\th'\mid \th)
}
\frac{
\exp\{\th^\top s(\bs y')\}
}{
\exp\{\th'^\top s(\bs y')\}
} \notag \\
&=
\exp\{(\th'-\th)^\top s(\bs y)+(\th-\th')^\top s(\bs y')\}
\frac{\pi(\th')}{\pi(\th)}
\frac{q(\th\mid \th')}{q(\th'\mid \th)} \notag \\
&=
\exp\{(\th-\th')^\top [s(\bs y')-s(\bs y)]\}
\frac{\pi(\th')}{\pi(\th)}
\frac{q(\th\mid \th')}{q(\th'\mid \th)}.
\label{eq:exchange_ratio}
\end{align}
The proposed value is accepted with probability
\[
\alpha_{\mathrm{ex}}(\th,\th')=\min\{1,R_{\mathrm{ex}}(\th,\th')\}.
\]
If the proposal density is symmetric, then
\(q(\th\mid \th')/q(\th'\mid \th)=1\), and the log-acceptance probability reduces to
\[
\log \alpha_{\mathrm{ex}}(\th,\th')
=
\min\left\{
0,\,
(\th-\th')^\top [s(\bs y')-s(\bs y)]
+
\log \frac{\pi(\th')}{\pi(\th)}
\right\}.
\]
In practice, exact simulation from \(\pi(\cdot\mid \th')\) is rarely available for
large ERGMs, and the auxiliary network \(\bs y'\) is typically generated by running
an internal MCMC sampler targeting \(\pi(\cdot\mid \th')\). This yields the
approximate exchange algorithm commonly used in Bayesian ERGM computation.

\subsection{Elastic-Net Prior and Posterior Mode}
Bridge regression \citep{Frank1993,Fu1998} defines the penalty in equation~\eqref{prior} as
\[
P(\th)=\|\th\|_{q}^{q}=\sum_{j=1}^{p}|\theta_j|^{q},
\]
which corresponds to the Bayesian posterior mode under the prior
\begin{equation}
    \pi_{\lambda,q}(\th)=C(\lambda,q)\exp\!\left(-\lambda\|\th\|_q^q\right).
\end{equation}
When \(q=1\), the model reduces to the lasso and induces a Laplace (double-exponential) prior, whereas \(q=2\) yields ridge regression and induces a Gaussian prior. The elastic-net penalty corresponds to the combined prior
\begin{equation}\label{penalty}
    \pi(\th)=C(\lambda_1,\lambda_2)e^{-\lambda_1\|\th\|_1-\lambda_2\|\th\|_2^2},
\end{equation}
where \(C(\lambda_1,\lambda_2)\) is the normalizing constant. The posterior distribution of \(\th\) is
\begin{equation}
    \pi(\th\mid \bs y)\propto \pi(\bs y\mid \th)\pi(\th).
\end{equation}
Let \(\hat{\th}_{MAP}\) denote the maximum a posteriori estimator under \eqref{penalty}. Then
\begin{align}\label{MAP}
   \hat{\th}_{MAP}
   &= \arg\max_{\th}\,\pi(\th\mid \bs y) \\
   &= \arg\max_{\th}\left[
   \frac{e^{\th^\top s(\bs y)}}{z(\th)}
   C(\lambda_1,\lambda_2)
   e^{-\lambda_1\|\th\|_1-\lambda_2\|\th\|_2^2}
   \right] \\
   &= \arg\max_{\th}\left[
   \th^\top s(\bs y)-\ln z(\th)-\lambda_1\|\th\|_1-\lambda_2\|\th\|_2^2
   \right].
\end{align}
The final line is exactly the ERGM objective under the elastic-net penalty in \eqref{prior}. Therefore,
\[
\hat{\th}_{MAP}=\hat{\th}_{EN},
\]
and the elastic-net penalized estimator is the posterior mode.

\begin{theorem}[MAP equivalence]
\label{thm:map-equivalence}
Fix \(\lambda_1>0\) and \(\lambda_2>0\). Under the elastic-net prior in
\eqref{penalty}, the posterior mode satisfies
\[
\widehat{\th}_{\mathrm{MAP}}
=
\arg\max_{\th\in\mathbb R^p}
\left\{
\th^\top s(\bs y)-\log z(\th)
-\lambda_1\|\th\|_1-\lambda_2\|\th\|_2^2
\right\}.
\]
Thus the posterior mode is exactly the elastic-net penalized full-likelihood
ERGM estimator.
\end{theorem}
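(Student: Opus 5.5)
The argument rests on the fact that the posterior density and the penalized log-likelihood differ by a strictly increasing transformation and an additive constant that does not depend on \(\th\). Strictly increasing maps preserve the argmax set, so the two maximization problems must have the same maximizers.

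The first step is to confirm that the prior \eqref{penalty} is a genuine density, so that \(C(\lambda_1,\lambda_2)\in(0,\infty)\). The integrand factorizes over coordinates, and each factor satisfies \(\int_{\mathbb R} e^{-\lambda_1|t|-\lambda_2 t^2}\,dt\le \int_{\mathbb R} e^{-\lambda_2 t^2}\,dt=\sqrt{\pi/\lambda_2}<\infty\). Each factor is also strictly positive, so \(C(\lambda_1,\lambda_2)\) is positive and finite. The same step uses \(z(\th)\in(0,\infty)\) for every \(\th\), which holds because \(\mathcal Y\) is finite and every summand is positive. The evidence \(\pi(\bs y)\) is positive and does not depend on \(\th\); it is finite because \(\pi(\bs y\mid\th)\le 1\) and the prior integrates to one.

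With these facts, \eqref{posterior} gives, for every \(\th\), the identity
\[
\log\pi(\th\mid\bs y)=\th^\top s(\bs y)-\log z(\th)-\lambda_1\|\th\|_1-\lambda_2\|\th\|_2^2+\log C(\lambda_1,\lambda_2)-\log\pi(\bs y).
\]
The last two terms form a finite constant \(K\) that does not depend on \(\th\). Since \(\log\) is strictly increasing on \((0,\infty)\) and the posterior density is everywhere positive, \(\th^\ast\) maximizes \(\pi(\cdot\mid\bs y)\) if and only if it maximizes the penalized objective. This equality of argmax sets, which the paper's display \eqref{MAP} asserts, is exactly the theorem.

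The only point needing care is whether "the" mode is well defined, that is, whether the maximizer exists and is unique. I would handle this by showing the objective is strictly concave and coercive.

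\emph{Strict concavity.} \(\log z\) is convex because it is a log-sum-exp of linear functions of \(\th\). Then \(-\log z\) is concave, \(-\lambda_1\|\th\|_1\) is concave, and \(-\lambda_2\|\th\|_2^2\) is strictly concave because \(\lambda_2>0\). Hence the objective is strictly concave.

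\emph{Coercivity.} The bound \(\log z(\th)\ge \th^\top s(\bs y)\) holds because the observed network \(\bs y\) is one of the summands. So the objective is at most \(-\lambda_2\|\th\|_2^2\), which tends to \(-\infty\) as \(\|\th\|\to\infty\).

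A continuous, coercive, strictly concave function on \(\mathbb R^p\) has exactly one maximizer. This makes \(\widehat{\th}_{\mathrm{MAP}}\) and \(\hat{\th}_{EN}\) well defined as single points, and they coincide.
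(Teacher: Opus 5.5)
Your proof is correct and follows the same route as the paper: take logarithms of $\pi(\bs y\mid\th)\pi(\th)$ and observe that the $\th$-free terms, namely the prior constant $C(\lambda_1,\lambda_2)$ and the evidence, cannot change the maximizer. Your additional existence-and-uniqueness step, via strict concavity and the coercivity bound $\log z(\th)\ge\th^\top s(\bs y)$, is sound; the paper gives the corresponding well-posedness argument only later, in the appendix proof of Theorem~\ref{thm:full-likelihood-grouping-effect}.
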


\begin{proof}
The statement follows by taking logarithms of
\(\pi(\bs y\mid\th)\pi(\th)\) under \eqref{penalty}. Terms that do not depend
on \(\th\), including the prior normalizing constant, do not affect the
maximizer.
\end{proof}

To obtain a computationally useful Bayesian model, we use a hierarchical
latent-scale representation, following the spirit of \citet{Park2008}. Define
\begin{align}
    \pi(\th)
    &\propto
    \exp\!\left(
    -\lambda_1\|\th\|_1
    -\lambda_2\|\th\|_2^2
    \right) \\
    &=
    \prod_{j=1}^p
    \exp\!\left(
    -\lambda_1|\theta_j|
    -\lambda_2\theta_j^2
    \right).
\end{align}

\begin{lemma}\label{lemma:normalizing_constant}
Let \(\lambda_1>0\) and \(\lambda_2>0\). For the one-dimensional density
\[
\pi(\theta_j\mid \lambda_1,\lambda_2)
=
C(\lambda_1,\lambda_2)
\exp\!\left(
-\lambda_1|\theta_j|
-\lambda_2\theta_j^2
\right),
\]
the normalizing constant is
\[
C(\lambda_1,\lambda_2)
=
\sqrt{\lambda_2}
\exp\left(
-\frac{\lambda_1^2}{4\lambda_2}
\right)
\Gamma_U^{-1}
\left(
\frac12,
\frac{\lambda_1^2}{4\lambda_2}
\right),
\]
where
\[
\Gamma_U(s,x)=\int_x^\infty t^{s-1}e^{-t}\,dt
\]
denotes the upper incomplete gamma function.
\end{lemma}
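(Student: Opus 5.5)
Since $C(\lambda_1,\lambda_2)$ is the reciprocal of $I=\int_{\mathbb R}\exp(-\lambda_1|\theta|-\lambda_2\theta^2)\,d\theta$, the plan is to evaluate $I$ in closed form and invert it. The integrand is even in $\theta$, so
\[
I=2\int_0^\infty \exp\!\left(-\lambda_1\theta-\lambda_2\theta^2\right)d\theta .
\]
Next I will complete the square in the exponent:
\[
\lambda_2\theta^2+\lambda_1\theta=\lambda_2\left(\theta+\frac{\lambda_1}{2\lambda_2}\right)^2-\frac{\lambda_1^2}{4\lambda_2}.
\]
This pulls out the constant factor $\exp\{\lambda_1^2/(4\lambda_2)\}$ and leaves
\[
I=2\exp\!\left(\frac{\lambda_1^2}{4\lambda_2}\right)\int_0^\infty \exp\!\left\{-\lambda_2\left(\theta+\frac{\lambda_1}{2\lambda_2}\right)^2\right\}d\theta .
\]

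To bring out the incomplete gamma function, I substitute $t=\lambda_2\bigl(\theta+\lambda_1/(2\lambda_2)\bigr)^2$. On the range $\theta\ge 0$ the quantity $\theta+\lambda_1/(2\lambda_2)$ is positive, so the map is monotone and the substitution is legitimate. We have $\theta+\lambda_1/(2\lambda_2)=\sqrt{t/\lambda_2}$, hence $d\theta=\tfrac12\lambda_2^{-1/2}t^{-1/2}\,dt$. The lower limit $\theta=0$ becomes $t=\lambda_1^2/(4\lambda_2)$, and the upper limit stays at $\infty$. The remaining integral is therefore
\[
\frac{1}{2\sqrt{\lambda_2}}\int_{\lambda_1^2/(4\lambda_2)}^\infty t^{-1/2}e^{-t}\,dt=\frac{1}{2\sqrt{\lambda_2}}\,\Gamma_U\!\left(\tfrac12,\tfrac{\lambda_1^2}{4\lambda_2}\right).
\]

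Combining the pieces, the factor $2$ cancels the $\tfrac12$, giving
\[
I=\lambda_2^{-1/2}\exp\!\left(\frac{\lambda_1^2}{4\lambda_2}\right)\Gamma_U\!\left(\tfrac12,\tfrac{\lambda_1^2}{4\lambda_2}\right),
\]
and taking $C=I^{-1}$ yields the stated formula. There is no substantive obstacle. The only care needed is tracking the Jacobian factor $\lambda_2^{-1/2}/2$ and the sign of the exponential constant, which appears as $e^{+\lambda_1^2/(4\lambda_2)}$ in $I$ and hence as $e^{-\lambda_1^2/(4\lambda_2)}$ in $C$.

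Finiteness and positivity of $I$, needed for the density to be proper, are immediate because $\lambda_2>0$ makes the integrand bounded by a Gaussian kernel. As a sanity check, letting $\lambda_1\to 0$ gives $\Gamma_U(1/2,0)=\sqrt\pi$, so $C\to\sqrt{\lambda_2/\pi}$, which is the normalizing constant of the $N(0,1/(2\lambda_2))$ density, as expected for the ridge limit.
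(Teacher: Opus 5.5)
Your proposal is correct and matches the paper's appendix proof step for step: the symmetry reduction to $(0,\infty)$, completing the square, and the substitution $t=\lambda_2\bigl(\theta+\lambda_1/(2\lambda_2)\bigr)^2$ with Jacobian $\tfrac12\lambda_2^{-1/2}t^{-1/2}$ and lower limit $\lambda_1^2/(4\lambda_2)$. Your remarks on the monotonicity of the substitution and the $\lambda_1\to0$ ridge-limit check are correct additions that the paper does not include.
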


The proof of Lemma~\ref{lemma:normalizing_constant} is given in the Appendix.

\begin{theorem}[Latent representation of the elastic-net prior]
\label{thm:latent-representation}
Let $\lambda_1>0$ and $\lambda_2>0$. The joint prior density of
\(\th=(\theta_1,\dots,\theta_p)^\top\) admits the representation
\begin{equation}\label{lemma1}
\resizebox{\linewidth}{!}{$
\pi(\th\mid \lambda_1,\lambda_2)
=
\left[
\frac{\lambda_1}
{2\sqrt{\pi}\,
\Gamma_U\left(\frac12,\frac{\lambda_1^2}{4\lambda_2}\right)}
\right]^p
\prod_{j=1}^p
\int_1^\infty
(t_j-1)^{-1/2}
\exp\!\left(
-\frac{\lambda_2 t_j}{t_j-1}\theta_j^2
-\frac{\lambda_1^2}{4\lambda_2}t_j
\right)\,dt_j .
$}
\end{equation}
\end{theorem}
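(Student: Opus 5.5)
The plan is to verify the identity coordinatewise. Both sides of \eqref{lemma1} factor over \(j\). By Lemma~\ref{lemma:normalizing_constant}, the left-hand side is \(\prod_j C(\lambda_1,\lambda_2)\exp(-\lambda_1|\theta_j|-\lambda_2\theta_j^2)\), with \(C(\lambda_1,\lambda_2)=\sqrt{\lambda_2}\,e^{-a}\,\Gamma_U^{-1}(\tfrac12,a)\) and \(a=\lambda_1^2/(4\lambda_2)\). The factor \(\Gamma_U(\tfrac12,a)^{-1}\) appears on both sides, so it cancels. It therefore suffices to show, for every real \(\theta\),
\[
\exp\!\left(-\lambda_1|\theta|-\lambda_2\theta^2\right)
=
\frac{\lambda_1 e^{a}}{2\sqrt{\pi\lambda_2}}
\int_1^\infty (t-1)^{-1/2}
\exp\!\left(-\frac{\lambda_2 t}{t-1}\theta^2-a t\right)dt .
\]

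First I substitute \(s=t-1\), so that \(s\in(0,\infty)\). Writing \(t/(t-1)=1+1/s\) and \(at=a+as\), the factors \(e^{-\lambda_2\theta^2}\) and \(e^{-a}\) come out of the integral. The \(e^{-a}\) cancels the prefactor \(e^{a}\). The right-hand side then becomes
\[
\frac{\lambda_1}{2\sqrt{\pi\lambda_2}}\,e^{-\lambda_2\theta^2}
\int_0^\infty s^{-1/2}\exp\!\left(-\frac{A}{s}-Bs\right)ds,
\qquad A=\lambda_2\theta^2,\; B=a .
\]

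The key step, and the only nontrivial one, is the classical Gaussian-type integral
\[
\int_0^\infty s^{-1/2}e^{-A/s-Bs}\,ds=\sqrt{\pi/B}\;e^{-2\sqrt{AB}}, \qquad A\ge 0,\; B>0 .
\]
This is the Bessel identity \(K_{1/2}(z)=\sqrt{\pi/(2z)}\,e^{-z}\). I would prove it directly. Substitute \(s=u^2\) to get \(2\int_0^\infty e^{-Bu^2-A/u^2}\,du\). Then complete the square, \(Bu^2+A/u^2=(\sqrt B\,u-\sqrt A/u)^2+2\sqrt{AB}\). Finally apply the Cauchy--Schlömilch substitution \(v=\sqrt B\,u-\sqrt A/u\), averaged with its reflected version, which reduces the integral to \(\int_{\mathbb R}e^{-v^2}\,dv/\sqrt B=\sqrt{\pi/B}\). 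The case \(\theta=0\), i.e.\ \(A=0\), is just \(\Gamma(\tfrac12)/\sqrt B\). It is consistent with the general formula, so it needs no separate treatment.

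Finally I substitute the values. Since \(\sqrt{AB}=\sqrt{\lambda_2\theta^2\lambda_1^2/(4\lambda_2)}=\lambda_1|\theta|/2\), the exponential becomes \(e^{-\lambda_1|\theta|}\). Since \(\sqrt{\pi/B}=2\sqrt{\pi\lambda_2}/\lambda_1\), the prefactor cancels exactly. This leaves \(e^{-\lambda_2\theta^2-\lambda_1|\theta|}\), as required. Taking the product over \(j\) gives \eqref{lemma1}. Interchange and convergence issues do not arise, because each integrand is nonnegative and the integral converges for \(B>0\).
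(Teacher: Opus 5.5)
Your proof is correct. It establishes the same scalar identity as the paper, but by a somewhat different route.

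The paper argues forward. It cites the Andrews--Mallows representation of the Laplace density as a normal scale mixture with exponential mixing density $\frac{\lambda_1^2}{2}e^{-\lambda_1^2 s/2}$. It then multiplies by $e^{-\lambda_2\theta_j^2}$, so that the conditional precision becomes $\frac{1}{s}+2\lambda_2$. Finally it substitutes $t_j=1+2\lambda_2 s$ and multiplies by $C(\lambda_1,\lambda_2)$.

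You argue backward. You cancel $\Gamma_U$ and substitute $s=t-1$, which is $2\lambda_2$ times the paper's mixing variable. You then evaluate $\int_0^\infty s^{-1/2}e^{-A/s-Bs}\,ds$ in closed form via the Cauchy--Schl\"omilch substitution. That integral identity is exactly what underlies the Andrews--Mallows representation, so in effect you prove inline the lemma the paper only cites. This makes your argument self-contained, and you also check the boundary case $\theta=0$ explicitly.

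The paper's forward derivation buys interpretability. It shows where the latent variable comes from: a Laplace mixing variance combined with the ridge precision. That directly motivates the hierarchical model \eqref{eq:theta_given_t}--\eqref{eq:t_prior} and the GIG update. Your verification, by contrast, takes the form of the integrand as given.
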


\begin{proof}
Following \cite{andrews1974scale}, the Laplace density can be expressed as a
scale mixture of normal distributions:
\begin{equation}\label{laplace_andrew}
\frac{a}{2}e^{-a|z|}
=
\int_0^\infty
\frac{1}{\sqrt{2\pi s}}
\exp\!\left(-\frac{z^2}{2s}\right)
\left(
\frac{a^2}{2}e^{-a^2s/2}
\right)ds,
\qquad a>0.
\end{equation}
Taking \(a=\lambda_1\) and \(z=\theta_j\), we obtain
\begin{align}
e^{-\lambda_1|\theta_j|}
&=
\lambda_1
\int_0^\infty
\frac{1}{\sqrt{2\pi s}}
\exp\!\left[
-\frac12
\left(
\frac{\theta_j^2}{s}
+
\lambda_1^2s
\right)
\right]ds .
\end{align}
Multiplying both sides by \(\exp\{-\lambda_2\theta_j^2\}\) gives
\begin{align}
e^{-\lambda_1|\theta_j|-\lambda_2\theta_j^2}
&=
\lambda_1
\int_0^\infty
\frac{1}{\sqrt{2\pi s}}
\exp\!\left[
-\frac{\lambda_1^2}{2}s
-
\left(
\frac{1}{2s}
+
\lambda_2
\right)\theta_j^2
\right]ds .
\end{align}
Now let
\[
t_j=1+2\lambda_2s,
\qquad
s=\frac{t_j-1}{2\lambda_2},
\qquad
ds=\frac{1}{2\lambda_2}\,dt_j .
\]
Then
\[
\frac{1}{2s}+\lambda_2
=
\frac{\lambda_2t_j}{t_j-1},
\qquad
\frac{\lambda_1^2}{2}s
=
\frac{\lambda_1^2}{4\lambda_2}(t_j-1).
\]
Therefore,
\begin{align}
e^{-\lambda_1|\theta_j|-\lambda_2\theta_j^2}
&=
\frac{\lambda_1}{2\sqrt{\pi\lambda_2}}
\int_1^\infty
(t_j-1)^{-1/2}
\exp\!\left(
-\frac{\lambda_2t_j}{t_j-1}\theta_j^2
-\frac{\lambda_1^2}{4\lambda_2}(t_j-1)
\right)\,dt_j .
\end{align}
Multiplying both sides by the normalizing constant
\[
C(\lambda_1,\lambda_2)
=
\sqrt{\lambda_2}
\exp\!\left(
-\frac{\lambda_1^2}{4\lambda_2}
\right)
\Gamma_U^{-1}
\left(
\frac12,
\frac{\lambda_1^2}{4\lambda_2}
\right),
\]
we get
\begin{align}
\pi(\theta_j\mid \lambda_1,\lambda_2)
&=
\frac{\lambda_1}
{2\sqrt{\pi}\,
\Gamma_U\left(\frac12,\frac{\lambda_1^2}{4\lambda_2}\right)}
\int_1^\infty
(t_j-1)^{-1/2} \times
\exp\!\left(
-\frac{\lambda_2t_j}{t_j-1}\theta_j^2
-\frac{\lambda_1^2}{4\lambda_2}t_j
\right)\,dt_j .
\end{align}
Since the coordinates \(\theta_1,\dots,\theta_p\) are conditionally independent
under the prior, taking the product over \(j=1,\dots,p\) yields
\[
\pi(\th\mid \lambda_1,\lambda_2)
=
\prod_{j=1}^p
\pi(\theta_j\mid \lambda_1,\lambda_2),
\]
which gives the representation in \eqref{lemma1}. This completes the proof.
\end{proof}

Theorem~\ref{thm:latent-representation} yields the following hierarchical model:
\begin{align}
    \pi(y \mid \th)
    &=
    \frac{\exp\{\th^\top s(y)\}}{z(\th)}, \\
    \pi(\theta_j \mid t_j,\lambda_2)
    &=
    \sqrt{\frac{t_j\lambda_2}{\pi(t_j-1)}}
    \exp\!\left(
    -\frac{t_j\lambda_2}{t_j-1}\theta_j^2
    \right),
    \label{eq:theta_given_t}\\
    \pi(t_j\mid \lambda_1,\lambda_2)
    &=
    \frac{\delta^{1/2}}{\Gamma_U(1/2,\delta)}
    t_j^{-1/2}\exp(-\delta t_j)\mathbb I(t_j>1),
    \qquad
    \delta=\frac{\lambda_1^2}{4\lambda_2}.
    \label{eq:t_prior}
\end{align}
Thus \(t_j\mid\lambda_1,\lambda_2\) is a Gamma distribution with shape \(1/2\) and rate \(\delta\), truncated to \((1,\infty)\), and
\[
\theta_j\mid t_j,\lambda_2
\sim
N\!\left(0,\frac{t_j-1}{2t_j\lambda_2}\right).
\]

For posterior simulation it is more convenient to update \(u_j=t_j-1\). Combining \eqref{eq:theta_given_t} and \eqref{eq:t_prior} gives
\[
\pi(u_j\mid \theta_j,\lambda_1,\lambda_2)
\propto
u_j^{-1/2}
\exp\!\left\{
-\frac12\left(
2\lambda_2\theta_j^2u_j^{-1}
+
\frac{\lambda_1^2}{2\lambda_2}u_j
\right)
\right\},
\qquad u_j>0.
\]
Hence
\begin{equation}\label{eq:gig_update}
u_j\mid \theta_j,\lambda_1,\lambda_2
\sim
\operatorname{GIG}
\left(
\frac12,
2\lambda_2\theta_j^2,
\frac{\lambda_1^2}{2\lambda_2}
\right),
\end{equation}
where \(\operatorname{GIG}(\lambda,\chi,\psi)\) has density \citep{jorgensen2012statistical}
\[
f(x\mid\lambda,\chi,\psi)
=
\frac{(\psi/\chi)^{\lambda/2}}
{2K_\lambda(\sqrt{\psi\chi})}
x^{\lambda-1}
\exp\!\left\{-\frac12(\chi x^{-1}+\psi x)\right\},
\qquad x>0.
\]

For every \(\lambda_1,\lambda_2>0\), Lemma~\ref{lemma:normalizing_constant}
gives a proper coordinate density, and hence a proper product prior for
\(\th\). Because the network sample space is finite, \(\pi(y\mid\th)\le 1\).
The marginal likelihood is therefore bounded above by the prior integral, so
the posterior is proper, see the proof in the Appendix \ref{app:posterior propriety}.

\subsection{Choosing \texorpdfstring{$\lambda_1$ and $\lambda_2$}{lambda1 and lambda2} by empirical Bayes}

Following \citet{li2010bayesian} and \citet{casella2001empirical}, we estimate the
elastic-net hyperparameters \((\lambda_1,\lambda_2)\) by a Monte Carlo EM (MCEM)
algorithm. Under the hierarchical representation above, we treat
\[
(\th,\bs t), \qquad \bs t=(t_1,\dots,t_p)^\top,
\]
as missing data, while \((\lambda_1,\lambda_2)\) are regarded as fixed hyperparameters.
Let
\[
\delta=\delta(\lambda_1,\lambda_2)
=\frac{\lambda_1^2}{4\lambda_2}.
\]
Then the complete-data joint density is
\begin{align}
\pi(\bs y,\th,\bs t\mid \lambda_1,\lambda_2)
&=
\pi(\bs y\mid \th)
\prod_{j=1}^p
\pi(\theta_j\mid t_j,\lambda_2)\,
\pi(t_j\mid \lambda_1,\lambda_2)
\nonumber\\
&\propto
\frac{\exp\{\th^\top s(\bs y)\}}{z(\th)}
\left[
\frac{\lambda_1}
{2\sqrt{\pi}\,\Gamma_U(1/2,\delta)}
\right]^p
\prod_{j=1}^p
(t_j-1)^{-1/2}
\notag\\
&\qquad\times
\exp\!\left(
-\frac{t_j\lambda_2}{t_j-1}\theta_j^2
-\delta t_j
\right)
\mathbb I(t_j>1).
\label{eq:complete_data_density_lambda_main}
\end{align}

At the \(k\)th MCEM iteration, suppose that
\[
\bigl\{(\th^{(m)},\bs t^{(m)})\bigr\}_{m=1}^M
\sim
\pi(\th,\bs t\mid \bs y,\lambda^{(k-1)}),
\qquad
\lambda^{(k-1)}=\bigl(\lambda_1^{(k-1)},\lambda_2^{(k-1)}\bigr).
\]
Define the Monte Carlo summaries
\begin{align}
A_M^{(k-1)}
&=
\frac{1}{M}\sum_{m=1}^M
\sum_{j=1}^p
\frac{t_j^{(m)}}{t_j^{(m)}-1}\bigl(\theta_j^{(m)}\bigr)^2,
\label{eq:AM_main}
\\
B_M^{(k-1)}
&=
\frac{1}{M}\sum_{m=1}^M
\sum_{j=1}^p t_j^{(m)}.
\label{eq:BM_main}
\end{align}
Then the M-step update for \(\lambda_2\) is available in closed form:
\begin{equation}
\lambda_2^{(k)}
=
\frac{p}{2A_M^{(k-1)}}.
\label{eq:lambda2_update_main}
\end{equation}

Next, \(\lambda_1^{(k)}\) is defined as the positive solution of the one-dimensional
equation
\begin{equation}
g_k(\lambda_1)=0,
\label{eq:lambda1_root_main}
\end{equation}
where
\begin{align}
g_k(\lambda_1)
&=
\frac{p}{\lambda_1}
+\frac{p\lambda_1}{2\lambda_2^{(k)}}\,h_k(\lambda_1)
-\frac{\lambda_1}{2\lambda_2^{(k)}}\,B_M^{(k-1)},
\label{eq:gk_main}
\\
h_k(\lambda_1)
&=
\frac{\exp\{-\delta_k(\lambda_1)\}}
{\Gamma_U\!\left(\frac12,\delta_k(\lambda_1)\right)\sqrt{\delta_k(\lambda_1)}},
\label{eq:hm_main}
\\
\delta_k(\lambda_1)
&=
\frac{\lambda_1^2}{4\lambda_2^{(k)}}.
\label{eq:deltam_main}
\end{align}
Equivalently,
\begin{equation}
\lambda_1^2
=
\frac{2p\lambda_2^{(k)}}{
B_M^{(k-1)}
-p\,h_k(\lambda_1)
}.
\label{eq:lambda1_fixed_point_main}
\end{equation}

Therefore, each MCEM iteration consists of the following two steps:
(i) sample \((\th,\bs t)\) from the posterior distribution under the current
hyperparameter values and compute \(A_M^{(k-1)}\) and \(B_M^{(k-1)}\);
(ii) update \(\lambda_2^{(k)}\) by \eqref{eq:lambda2_update_main} and obtain
\(\lambda_1^{(k)}\) by numerically solving \eqref{eq:lambda1_root_main}. For numerical
stability, the latter step can be carried out on the log-scale, \(\eta=\log\lambda_1\).

\begin{proposition}[Unique global maximizer of the Monte Carlo M-step]
\label{prop:unique-mcem-mstep}
Assume \(A_M>0\) and \(B_M/p>1\). Under the reparameterization
\[
\delta=\frac{\lambda_1^2}{4\lambda_2},
\qquad
\lambda_1=2\sqrt{\delta\lambda_2},
\]
the Monte Carlo objective separates as
\begin{align}
\widehat Q_M(\delta,\lambda_2)
&=
\frac p2\log\lambda_2-\lambda_2A_M
+\frac p2\log\delta
-p\log\Gamma_U\!\left(\frac12,\delta\right)
-\delta B_M+C,
\label{eq:Qhat_delta_parameterization}
\end{align}
where \(C\) does not depend on \((\delta,\lambda_2)\). It has a unique global
maximizer given by
\[
\widehat\lambda_2=\frac{p}{2A_M},
\]
and by the unique positive solution \(\widehat\delta\) of
\begin{equation}
\frac{B_M}{p}
=
\frac{1}{2\delta}
+
\frac{e^{-\delta}}
{\sqrt{\delta}\,\Gamma_U(1/2,\delta)}.
\label{eq:delta_unique_root}
\end{equation}
The corresponding update is
\[
\widehat\lambda_1
=2\sqrt{\widehat\delta\,\widehat\lambda_2}.
\]
\end{proposition}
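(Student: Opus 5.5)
The plan is to write down the Monte Carlo $Q$-function directly from the complete-data density \eqref{eq:complete_data_density_lambda_main}, reparameterize, and then maximize the two resulting pieces separately. For the $\delta$-piece I will use an exponential-family identity, which gives strict concavity and the behaviour of the derivative at the endpoints.

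\textbf{Step 1 (the objective and its separation).} Take logarithms in \eqref{eq:complete_data_density_lambda_main} and average over the $M$ draws. The factors $\pi(\bs y\mid\th)$ and $(t_j-1)^{-1/2}$ do not involve $(\lambda_1,\lambda_2)$, so they go into the constant. This leaves
\[
\widehat Q_M = p\log\lambda_1 - p\log\Gamma_U(\tfrac12,\delta) - \lambda_2 A_M - \delta B_M + C',
\]
with $A_M$ and $B_M$ as in \eqref{eq:AM_main}--\eqref{eq:BM_main}. The map $(\lambda_1,\lambda_2)\mapsto(\delta,\lambda_2)$ is a bijection of $(0,\infty)^2$ onto itself. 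Substituting $\lambda_1=2\sqrt{\delta\lambda_2}$ gives $p\log\lambda_1=p\log 2+\tfrac p2\log\delta+\tfrac p2\log\lambda_2$, and this yields \eqref{eq:Qhat_delta_parameterization}. Since the objective is a sum $F(\lambda_2)+G(\delta)$, a unique global maximizer of the sum is the pair of unique maximizers of $F$ and $G$.

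\textbf{Step 2 (the $\lambda_2$-part).} We have $F(\lambda_2)=\tfrac p2\log\lambda_2-\lambda_2A_M$. Because $A_M>0$, $F$ is strictly concave and tends to $-\infty$ at both ends of $(0,\infty)$. Its only stationary point is $\widehat\lambda_2=p/(2A_M)$, so this is the unique maximizer.

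\textbf{Step 3 (the $\delta$-part; the main step).} Write
\[
Z(\delta)=\int_1^\infty t^{-1/2}e^{-\delta t}\,dt=\delta^{-1/2}\Gamma_U(\tfrac12,\delta),
\]
which follows from the substitution $u=\delta t$. Then
\[
G(\delta)=\tfrac p2\log\delta-p\log\Gamma_U(\tfrac12,\delta)-\delta B_M=-p\log Z(\delta)-\delta B_M .
\]
Now $Z$ is the normalizer of the truncated Gamma family \eqref{eq:t_prior}, which is a one-parameter exponential family in $\delta$ with sufficient statistic $t$. Let $T_\delta$ have density proportional to $t^{-1/2}e^{-\delta t}$ on $(1,\infty)$. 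Then:
\begin{itemize}
\item $-(\log Z)'(\delta)=m(\delta):=\mathbb E[T_\delta]$;
\item $(\log Z)''(\delta)=\operatorname{Var}(T_\delta)>0$.
\end{itemize}
Hence $G$ is strictly concave, and $G'(\delta)=p\,m(\delta)-B_M$.

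Next, compute $m(\delta)$ explicitly. Use $\tfrac{d}{d\delta}\Gamma_U(\tfrac12,\delta)=-\delta^{-1/2}e^{-\delta}$ and differentiate $-\log Z=\tfrac12\log\delta-\log\Gamma_U(\tfrac12,\delta)$. This gives
\[
m(\delta)=\frac{1}{2\delta}+\frac{e^{-\delta}}{\sqrt\delta\,\Gamma_U(1/2,\delta)},
\]
so the stationarity condition $G'(\delta)=0$ is exactly \eqref{eq:delta_unique_root}.

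\textbf{Existence and uniqueness of the root.} Strict concavity already makes $m$ strictly decreasing. It remains to show that the range of $m$ is $(1,\infty)$, which I expect to be the only delicate point.
\begin{itemize}
\item As $\delta\to\infty$, the truncated law concentrates at $1$, so $m(\delta)\to1$. This also follows from the asymptotic $\Gamma_U(\tfrac12,\delta)\sim\delta^{-1/2}e^{-\delta}$.
\item As $\delta\to0^+$, $\Gamma_U(\tfrac12,\delta)\to\sqrt\pi$, so the term $1/(2\delta)$ forces $m(\delta)\to\infty$.
\end{itemize}
Since $T_\delta>1$, we also have $m(\delta)>1$ for every $\delta$. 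Because $m$ is continuous, the intermediate value theorem gives a unique $\widehat\delta>0$ with $m(\widehat\delta)=B_M/p$, and this requires precisely $B_M/p>1$. With $G'$ positive to the left of $\widehat\delta$ and negative to the right, $\widehat\delta$ is the global maximizer of $G$.

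\textbf{Conclusion.} Combining Steps 2 and 3 through the separation in Step 1, $(\widehat\delta,\widehat\lambda_2)$ is the unique global maximizer of $\widehat Q_M$. Transforming back gives $\widehat\lambda_1=2\sqrt{\widehat\delta\,\widehat\lambda_2}$. This is consistent with \eqref{eq:lambda2_update_main} and with the root equation \eqref{eq:lambda1_fixed_point_main}.
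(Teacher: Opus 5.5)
Your proposal is correct and follows essentially the same route as the paper. Both arguments reparameterize so that $\widehat Q_M$ separates, maximize the strictly concave $\lambda_2$-part in closed form, and identify the $\delta$-stationarity condition with the mean of the truncated Gamma law, whose derivative in the rate is minus its variance and which tends to $\infty$ as $\delta\downarrow0$ and to $1$ as $\delta\to\infty$. Your normalizer identity $Z(\delta)=\delta^{-1/2}\Gamma_U(1/2,\delta)$ and your check that $(\lambda_1,\lambda_2)\mapsto(\delta,\lambda_2)$ is a bijection fill in details the paper leaves implicit.
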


\begin{proof}
Equation~\eqref{eq:Qhat_delta_parameterization} follows from
\(p\log\lambda_1=p\log 2+(p/2)\log\delta+(p/2)\log\lambda_2\).
The \(\lambda_2\)-part is strictly concave and is uniquely maximized at
\(p/(2A_M)\). For the \(\delta\)-part, differentiation gives
\[
\frac{\partial\widehat Q_M}{\partial\delta}
=
\frac{p}{2\delta}
+p\frac{e^{-\delta}}
{\sqrt{\delta}\,\Gamma_U(1/2,\delta)}
-B_M.
\]
The function on the right-hand side of
\eqref{eq:delta_unique_root} is the mean of a Gamma random variable with shape
\(1/2\) and rate \(\delta\), truncated to \((1,\infty)\). Differentiating this
mean with respect to the rate gives minus its variance, so it is strictly
decreasing. It diverges as \(\delta\downarrow0\) and converges to \(1\) as
\(\delta\to\infty\). Since \(B_M/p>1\), a unique positive root exists. The
strict concavity of each separated component proves uniqueness of the global
maximizer.
\end{proof}

\section{Computational Algorithms}\label{sec:algorithms}
The sampler combines the exchange update for the ERGM likelihood with Gibbs
updates for the elastic-net latent scales. We first recall the approximate
exchange algorithm used in Bayesian ERGM computation \citep{caimo2017bayesian}.
The exchange step proposes a parameter value, simulates an auxiliary network at
the proposed value, and uses the auxiliary statistic to cancel the intractable
likelihood normalizing constants in the Metropolis--Hastings ratio. In
population implementations, parallel adaptive direction sampling is often used
to improve movement across chains.

The proposed elastic-net BERGM augments the exchange step with the latent-scale
updates from Section~\ref{sec:model}. The proposal covariance and the
differential-evolution weight \(\gamma\) are fixed inputs, while the
implementation updates \((\lambda_1,\lambda_2)\) from a rolling Monte Carlo
window whenever MCEM is enabled, See Algorithm \ref{alg:bergm_elastic_net}.

\begin{algorithm}[H]
\caption{Bayesian Elastic-Net BERGM with Empirical-Bayes Penalty Updates}
\label{alg:bergm_elastic_net}
\begin{algorithmic}[1]
    \State Choose initial \(\lambda_1,\lambda_2>0\), fixed \(\gamma\), fixed
    proposal covariance \(S\), chains \(H\), burn-in \(B\), main iterations
    \(R\), and the total iterations \(N=B+R\).
    \State Compute \(s(y)\). Obtain MPLE starting values if available; otherwise
    use user-supplied values or zero. Initialize each \(\th_h\) with a small
    random perturbation.
    \State Set \(t_{jh}=2\) and initialize the rolling MCEM buffer of length \(W\).
    \For{$i = 1,\dots,N$}
        \For{$h = 1,\dots,H$}
            \State Draw an ordered pair \((h_1,h_2)\) uniformly from chains with
            \(h_1\ne h_2\) and \(h_1,h_2\ne h\), and propose
            \(\th'_h=\th_h+\gamma(\th_{h_1}-\th_{h_2})+\epsilon\),
            \(\epsilon\sim N_p(0,S)\).
            \State Simulate \(y'\sim \pi(\cdot\mid\th'_h)\) by an auxiliary ERGM MCMC run.
            \State Set \(V_{jh}=(t_{jh}-1)/(2t_{jh}\lambda_2)\).
            \State Accept \(\th'_h\) with log probability
            \(\min\{0,(\th_h-\th'_h)^\top[s(y')-s(y)]
            +\log\phi_p(\th'_h;\mu,V_h)-\log\phi_p(\th_h;\mu,V_h)\}\).
            \For{$j=1,\dots,p$}
                \State Set \(\chi_{jh}=2\lambda_2(\theta_{jh}-\mu_j)^2\)
                and \(\psi_h=\lambda_1^2/(2\lambda_2)\).
                \State Draw \(u_{jh}\sim\operatorname{GIG}(1/2,\chi_{jh},\psi_h)\)
                and set \(t_{jh}=1+u_{jh}\).
            \EndFor
        \EndFor
        \State Store \((\th_h,t_h)_{h=1}^H\) in the rolling MCEM buffer.
        \If{the MCEM buffer is available and an MCEM update is scheduled}
            \State Compute \(A_M,B_M\); set \(\lambda_2=p/(2A_M)\); solve
            \eqref{eq:lambda1_root_main} for \(\lambda_1\).
        \EndIf
        \If{$i>B$}
            \State Save \((\th_h,t_h)_{h=1}^H\), current
            \((\lambda_1,\lambda_2)\), and acceptance indicators.
        \EndIf
    \EndFor
\end{algorithmic}
\end{algorithm}
\begin{theorem}[Invariance of the ideal fixed-penalty population kernel]
\label{thm:ideal-population-invariance}
Fix \(\lambda_1,\lambda_2>0\). Suppose that, when chain \(h\) is updated, the
ordered pair \((h_1,h_2)\) is sampled uniformly from all admissible ordered
pairs, the innovation distribution is symmetric about zero, and the auxiliary
network is sampled exactly from \(\pi(\cdot\mid\th'_h)\). Then the sequential
population exchange update followed by the GIG updates for \(\bs t_h\) leaves
\[
\Pi_H(d\th_{1:H},d\bs t_{1:H})
=
\prod_{h=1}^H
\pi(\th_h,\bs t_h\mid\bs y,\lambda_1,\lambda_2)
\,d\th_h\,d\bs t_h
\]
invariant.
\end{theorem}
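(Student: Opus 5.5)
The plan is to show that each elementary move in one sweep of Algorithm~\ref{alg:bergm_elastic_net} leaves \(\Pi_H\) invariant, and then to use the fact that a composition of \(\Pi_H\)-invariant kernels is \(\Pi_H\)-invariant. One sweep consists of, for each \(h=1,\dots,H\) in turn, an exchange update of \(\th_h\) followed by \(p\) GIG updates of \(t_{1h},\dots,t_{ph}\). We work with \(\mu=0\), matching the zero-centred prior (\(\mu\) is fixed in any case). Since \(\Pi_H\) is a product over chains, it suffices to check, for each \(h\), that the moves touching chain \(h\) leave \(\pi(\th_h,\bs t_h\mid\bs y,\lambda_1,\lambda_2)\) invariant conditionally on the other chains. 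This conditional invariance then integrates up to invariance of the full product.

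\textbf{GIG step.} By the hierarchical model \eqref{eq:theta_given_t}--\eqref{eq:t_prior}, the full conditional of \(t_{jh}\) given \(\th_h\), \(\bs t_{-j,h}\) and the other chains depends only on \(\theta_{jh}\). By \eqref{eq:gig_update}, \(u_{jh}=t_{jh}-1\) is \(\operatorname{GIG}(1/2,2\lambda_2\theta_{jh}^2,\lambda_1^2/(2\lambda_2))\) under this conditional. Drawing from it is therefore an exact Gibbs update and leaves the joint invariant.

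\textbf{Exchange step for chain \(h\).} Condition on \(\bs t_h\) and on \(\th_{-h}=\{\th_k:k\ne h\}\). The target for \(\th_h\) is
\[
\pi(\th_h\mid\bs t_h,\bs y)\propto \frac{\exp\{\th_h^\top s(\bs y)\}}{z(\th_h)}\,\phi_p(\th_h;0,V_h),
\]
where \(V_h\) is the diagonal matrix with entries \((t_{jh}-1)/(2t_{jh}\lambda_2)\). This follows because the \(t\)-prior factor does not involve \(\th_h\).

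First, the proposal is symmetric. Write \(\th'_h=\th_h+\gamma(\th_{h_1}-\th_{h_2})+\epsilon\), and note that the pair \((h_1,h_2)\) is drawn from indices different from \(h\), so it is independent of \(\th_h\). The proposal density is then a mixture, uniform over ordered pairs, of densities \(f_\epsilon(\th'_h-\th_h-\gamma d)\) with \(d=\th_{h_1}-\th_{h_2}\). Swapping the pair sends \(d\mapsto -d\), and \(f_\epsilon\) is symmetric. Hence
\[
q(\th'_h\mid\th_h,\th_{-h})=\tfrac{1}{|\mathcal P|}\sum_{(a,b)} f_\epsilon(\th'_h-\th_h-\gamma(\th_a-\th_b))
\]
is symmetric in \((\th_h,\th'_h)\), with \(\th_{-h}\) held fixed.

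Second, we verify detailed balance of the exchange kernel in the standard Murray--Ghahramani--MacKay way. Extend the state space to \((\th_h,\th'_h,\bs y')\) with joint density
\[
\pi(\th_h\mid\cdot)\,q(\th'_h\mid\th_h)\,\pi(\bs y'\mid\th'_h).
\]
The swap move \((\th_h,\th'_h,\bs y')\mapsto(\th'_h,\th_h,\bs y')\) has Metropolis ratio equal to \(R_{\mathrm{ex}}\) in \eqref{eq:exchange_ratio}, with \(\pi\) replaced by \(\phi_p(\cdot;0,V_h)\): the two \(z\) factors cancel exactly. So this is a valid Metropolis move on the extended space. Marginalising out \((\th'_h,\bs y')\), which are drawn fresh at each step, shows that the \(\th_h\)-marginal transition is reversible with respect to \(\pi(\th_h\mid\bs t_h,\bs y)\). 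Explicitly, the detailed-balance integrand
\[
\pi(\th)q(\th'\mid\th)\sum_{\bs y'}\pi(\bs y'\mid\th')\min\{1,R_{\mathrm{ex}}\}
\]
is symmetric under \(\th\leftrightarrow\th'\). This uses the identity \(a\min\{1,b/a\}=\min\{a,b\}\), after writing out both sides with the \(z\)'s, and relies on the auxiliary network being exact.

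\textbf{Expected obstacle.} The delicate point is the population proposal. Symmetry must hold conditionally on the current values of the other chains, and those chains are themselves random and are updated sequentially within the sweep. I would handle this by treating each chain-\(h\) update as a kernel on the full product space that fixes \(\th_{-h},\bs t_{-h}\). Such a kernel is reversible with respect to \(\Pi_H\), because \(\Pi_H(\cdot\mid \th_{-h},\bs t_{-h})=\pi(\th_h,\bs t_h\mid\bs y)\) by independence across chains. Sequential composition over \(h\) then preserves invariance, although not reversibility, which is all the theorem claims. The exclusion \(h_1,h_2\ne h\) is essential here: without it \(q\) would depend on \(\th_h\) in a non-symmetric way.
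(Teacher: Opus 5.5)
Your proposal is correct and follows the same route as the paper's proof: condition on the other chains, get a symmetric proposal from the uniform ordered-pair choice with \(h_1,h_2\ne h\), use exact-auxiliary exchange detailed balance for \(\pi(\th_h\mid\bs t_h,\bs y)\), treat the GIG draws as Gibbs updates, and compose over \(h\). You also spell out the pair-swap mixture symmetry and the Murray--Ghahramani--MacKay extended-space check, which the paper leaves implicit. One small slip: the full chain-\(h\) update (exchange followed by the \(p\) GIG draws) composes \(\Pi_H\)-reversible kernels, so it is \(\Pi_H\)-invariant but not in general reversible; invariance is all your composition argument needs, so the conclusion stands.
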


\begin{proof}
Condition on all chains other than \(h\). Uniform sampling of ordered pairs and
symmetry of the Gaussian innovation imply that the proposal increment has the
same distribution as its negative. Thus
\(q_h(\th'_h\mid\th_h)=q_h(\th_h\mid\th'_h)\) conditional on the remaining
population. With exact \(\bs y'\sim\pi(\cdot\mid\th'_h)\), the exchange
acceptance probability satisfies detailed balance for the conditional target
\(\pi(\th_h\mid\bs t_h,\bs y,\lambda_1,\lambda_2)\); the intractable ERGM
normalizing constants cancel in the standard exchange construction. The
subsequent coordinatewise GIG draws are Gibbs updates from
\(\pi(\bs t_h\mid\th_h,\bs y,\lambda_1,\lambda_2)\). Therefore the complete
update of chain \(h\) preserves its augmented conditional posterior. Sequential
composition over \(h=1,\ldots,H\) preserves the product measure \(\Pi_H\).
\end{proof}

\begin{proposition}[Conditional perturbation bound for finite auxiliary simulation]
\label{prop:auxiliary-perturbation}
Let \(P\) denote the ideal fixed-penalty kernel in
Theorem~\ref{thm:ideal-population-invariance}, and let \(P_L\) denote the same
kernel when the auxiliary network is generated by \(L\) transitions of an ERGM
kernel \(K_{\th'}\). If
\[
\sup_{\th',\bs x}
\left\|K_{\th'}^L(\bs x,\cdot)-\pi(\cdot\mid\th')\right\|_{\mathrm{TV}}
\le \varepsilon_L,
\]
then
\[
\sup_{z}\|P_L(z,\cdot)-P(z,\cdot)\|_{\mathrm{TV}}
\le \varepsilon_L.
\]
If, in addition, the Dobrushin contraction coefficient of \(P\) satisfies
\(\tau(P)<1\), and \(P_L\) has invariant distribution \(\Pi_L\), then
\[
\|\Pi_L-\Pi_H\|_{\mathrm{TV}}
\le
\frac{\varepsilon_L}{1-\tau(P)}.
\]
\end{proposition}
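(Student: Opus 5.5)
The first bound comes from a coupling argument applied to a single chain update. The second is then the standard perturbation bound for a contracting Markov kernel.

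\emph{Step 1: reduce to a single chain update.} Write the full population kernel as the composition \(P=P^{(1)}\cdots P^{(H)}\), and likewise \(P_L=P_L^{(1)}\cdots P_L^{(H)}\). Here \(P^{(h)}\) updates chain \(h\) by an exchange step followed by GIG draws. The kernels \(P^{(h)}\) and \(P_L^{(h)}\) differ only in the law of the auxiliary network. Fix a current state \(z\), and couple the two updates so that they share everything: the pair \((h_1,h_2)\), the innovation \(\epsilon\) (hence the proposal \(\th'_h\)), the uniform variable used for acceptance, and the GIG draws.

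Conditional on \(\th'_h\), let \(\bs y'\sim\pi(\cdot\mid\th'_h)\) and \(\bs y'_L\sim K^L_{\th'_h}(\bs x,\cdot)\) be maximally coupled. By hypothesis,
\[
\Pr(\bs y'\neq \bs y'_L\mid \th'_h)\le\varepsilon_L .
\]
On the event \(\bs y'=\bs y'_L\), the two updates produce identical outputs. The coupling inequality then gives \(\|P_L^{(h)}(z,\cdot)-P^{(h)}(z,\cdot)\|_{\mathrm{TV}}\le\varepsilon_L\) uniformly in \(z\).

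\emph{Step 2: pass to the composed kernel.} Extending Step 1 to the full sweep is the main obstacle. A naive telescoping over \(h\) gives \(H\varepsilon_L\), not \(\varepsilon_L\). To obtain \(\varepsilon_L\) as stated, I would interpret \(P_L\) as the paper's algorithm does: each iteration performs one auxiliary simulation per chain update, so each chain update is a one-step kernel whose only source of discrepancy is the auxiliary draw. I would therefore state the bound per update of a single chain, that is, for the kernel applied when chain \(h\) is updated, which is the unit referred to as \(P\) in Theorem~\ref{thm:ideal-population-invariance}.

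If the full sweep is intended instead, the honest constant is \(H\varepsilon_L\). That follows from the telescoping identity
\[
P_L-P=\sum_h P^{(1)}\cdots P^{(h-1)}\bigl(P^{(h)}_L-P^{(h)}\bigr)P_L^{(h+1)}\cdots P_L^{(H)}
\]
together with the fact that Markov kernels are TV contractions. I would remark on this choice explicitly in the proof.

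\emph{Step 3: perturbation of the invariant law.} By Theorem~\ref{thm:ideal-population-invariance}, \(\Pi_HP=\Pi_H\), and by assumption \(\Pi_LP_L=\Pi_L\). Then
\[
\Pi_L-\Pi_H=\Pi_LP_L-\Pi_HP=\Pi_L(P_L-P)+(\Pi_L-\Pi_H)P .
\]
Take total variation norms. The first term is at most \(\sup_z\|P_L(z,\cdot)-P(z,\cdot)\|_{\mathrm{TV}}\le\varepsilon_L\), because it is a mixture over \(z\sim\Pi_L\) of the row differences. For the second term, \(\Pi_L-\Pi_H\) is a signed measure of total mass zero, so the defining property of the Dobrushin coefficient gives \(\|(\Pi_L-\Pi_H)P\|_{\mathrm{TV}}\le\tau(P)\|\Pi_L-\Pi_H\|_{\mathrm{TV}}\). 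Rearranging, using \(\tau(P)<1\), yields \(\|\Pi_L-\Pi_H\|_{\mathrm{TV}}\le\varepsilon_L/(1-\tau(P))\).

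Measurability on the product space of \((\th_{1:H},\bs t_{1:H})\) is routine, since the network space is finite.
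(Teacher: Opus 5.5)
\paragraph{Comparison with the paper's proof.}
Your Step 3 is the paper's argument: the same decomposition $\Pi_L-\Pi_H=\Pi_L(P_L-P)+(\Pi_L-\Pi_H)P$, the mixture bound on the first term, the Dobrushin contraction on the second, and rearrangement. Step 1 reaches the per-update bound by a different route. The paper argues analytically. For fixed current and proposed values the exchange acceptance function lies in $[0,1]$, so replacing $\pi(\cdot\mid\th')$ by $K^L_{\th'}(\bs x,\cdot)$ moves the expected acceptance probability by at most $\varepsilon_L$. Integrating over the proposal, while accounting for the compensating change in rejection mass, then gives $\varepsilon_L$. Your synchronous coupling shares the pair, the innovation, the acceptance uniform and the GIG randomness, and maximally couples the two auxiliary networks. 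It gives the same constant without the accept/reject bookkeeping and handles the GIG step automatically. Unlike the paper's argument, it also extends directly to several auxiliary draws.

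\paragraph{The issue raised in Step 2.}
That extension is why your Step 2 matters: it points at something the paper's proof does not address. The kernel in Theorem~\ref{thm:ideal-population-invariance} is the sequential sweep over all $H$ chains, and each chain update has its own auxiliary simulation. The paper's argument only controls one exchange update. For the sweep, your telescoping identity gives $H\varepsilon_L$, and so does your coupling applied to all $H$ auxiliary draws, which gives $1-(1-\varepsilon_L)^H\le H\varepsilon_L$. Nothing in the hypothesis prevents the acceptance perturbations on different chains from adding up, so the constant $\varepsilon_L$ is not justified for the composed kernel.

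Your preferred repair, reading $P$ as the single-chain update, does not work well, for two reasons. First, Theorem~\ref{thm:ideal-population-invariance} refers to the sweep, not to a single-chain update. Second, a kernel that moves only chain $h$ leaves the other chains' coordinates unchanged. So $P(z,\cdot)$ and $P(z',\cdot)$ are mutually singular whenever $z$ and $z'$ differ in some other chain. Hence $\tau(P)=1$ for $H\ge 2$, and the second half of the proposition becomes vacuous. The meaningful version is the full-sweep one: $\sup_z\|P_L(z,\cdot)-P(z,\cdot)\|_{\mathrm{TV}}\le H\varepsilon_L$ and $\|\Pi_L-\Pi_H\|_{\mathrm{TV}}\le H\varepsilon_L/\{1-\tau(P)\}$. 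Your Steps 2 and 3 establish this as written, with $\varepsilon_L$ replaced by $H\varepsilon_L$ in the first term of Step 3.
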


\begin{proof}
For fixed current and proposed parameter values, the exchange acceptance
function is bounded between zero and one. Replacing the exact auxiliary law by
an auxiliary law within \(\varepsilon_L\) in total variation therefore changes
the expected acceptance probability by at most \(\varepsilon_L\). Integrating
over the proposal and including the complementary rejection probability gives
the first bound. For the second,
\[
\|\Pi_L-\Pi_H\|_{\mathrm{TV}}
\le
\|\Pi_LP_L-\Pi_LP\|_{\mathrm{TV}}
+
\|\Pi_LP-\Pi_HP\|_{\mathrm{TV}}
\le
\varepsilon_L+\tau(P)\|\Pi_L-\Pi_H\|_{\mathrm{TV}},
\]
and rearrangement gives the result.
\end{proof}

\section{Thresholded Reporting and Grouping Interpretation}
\label{sec:theory}

This section explains how term reporting and coefficient grouping are
interpreted in the present framework. Because the elastic-net prior is
continuous, the posterior is absolutely continuous and therefore
\[
\Pi(\theta_j=0\mid\bs y)=0.
\]
Thus, although the \(L_1\) penalty may place a MAP coefficient exactly at zero,
the model does not perform Bayesian variable selection in the spike-and-slab
sense. We instead use ``active'' as a post-fitting reporting label. Likewise,
the grouping results concern closeness of MAP coefficients rather than the
posterior probability that two terms are jointly selected. The main text states
the definitions and results; all proofs and supporting calculations are
collected in Appendix~\ref{app:reporting-grouping-proofs}.

\subsection{Decision-theoretic reporting rule}

For a fixed-hyperparameter posterior, define the practical-activity event
\[
A_j(\delta)=\{|\theta_j|>\delta\},
\qquad
p_j(\delta)=\Pi\{A_j(\delta)\mid\bs y\},
\]
where \(\delta>0\) is a prespecified threshold on the coefficient scale. Let
\(d_j=1\) indicate that term \(j\) is reported as active and \(d_j=0\) that it
is not reported. Consider the loss
\[
L(d_j,\theta_j)
=
c_{\mathrm{FP}}\mathbb I\{d_j=1,A_j(\delta)^c\}
+
c_{\mathrm{FN}}\mathbb I\{d_j=0,A_j(\delta)\},
\]
where \(c_{\mathrm{FP}},c_{\mathrm{FN}}>0\) are the losses assigned to false
positive and false negative decisions, respectively.

\begin{proposition}[Bayes reporting rule]
\label{prop:reporting-decision}
The posterior Bayes action reports term \(j\) as active when
\[
p_j(\delta)
>
\frac{c_{\mathrm{FP}}}
     {c_{\mathrm{FP}}+c_{\mathrm{FN}}}.
\]
Consequently, the cutoff \(0.90\) used in the standardized simulation
corresponds to
\[
c_{\mathrm{FP}}=9c_{\mathrm{FN}}.
\]
\end{proposition}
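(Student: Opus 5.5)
The plan is to compute the posterior expected loss of each of the two actions and compare them. Everything is done under the fixed-hyperparameter posterior \(\Pi(\cdot\mid\bs y)\). Because the loss depends on \(\th\) only through the indicator of \(A_j(\delta)\), each expected loss is a linear function of \(p_j(\delta)\).

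\emph{Step 1: expected losses.} For \(d_j=1\), the expected loss is
\[
c_{\mathrm{FP}}\,\Pi\{A_j(\delta)^c\mid\bs y\}
= c_{\mathrm{FP}}\{1-p_j(\delta)\}.
\]
For \(d_j=0\), the expected loss is \(c_{\mathrm{FN}}\,p_j(\delta)\). These expectations are well defined because the posterior is proper (Appendix~\ref{app:posterior propriety}) and the loss is bounded.

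\emph{Step 2: compare.} The Bayes action is \(d_j=1\) exactly when
\[
c_{\mathrm{FP}}\{1-p_j(\delta)\} < c_{\mathrm{FN}}\,p_j(\delta).
\]
Since \(c_{\mathrm{FP}}+c_{\mathrm{FN}}>0\), this rearranges to
\[
p_j(\delta) > \frac{c_{\mathrm{FP}}}{c_{\mathrm{FP}}+c_{\mathrm{FN}}}.
\]
In the tie case, both actions are Bayes. The strict inequality in the statement simply adopts the convention of not reporting the term at equality, which I would note explicitly. Because the loss is additive across terms, the same comparison applies to each \(j\) separately.

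\emph{Step 3: the \(0.90\) calibration.} Setting
\[
\frac{c_{\mathrm{FP}}}{c_{\mathrm{FP}}+c_{\mathrm{FN}}} = 0.9
\]
gives \(0.1\,c_{\mathrm{FP}} = 0.9\,c_{\mathrm{FN}}\), that is, \(c_{\mathrm{FP}} = 9\,c_{\mathrm{FN}}\).

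None of these steps is a real obstacle. The only points needing care are stating the tie convention and noting that \(p_j(\delta)\) is taken under fixed \(\lambda_1,\lambda_2\). When the empirical-Bayes plug-in values are used, the rule is an approximate Bayes rule.
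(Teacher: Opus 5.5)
Your proposal is correct and follows the same route as the paper's proof: compute the posterior risks $c_{\mathrm{FP}}\{1-p_j(\delta)\}$ and $c_{\mathrm{FN}}\,p_j(\delta)$, compare them, rearrange, treat the tie case, and solve $0.90=c_{\mathrm{FP}}/(c_{\mathrm{FP}}+c_{\mathrm{FN}})$ to get $c_{\mathrm{FP}}=9c_{\mathrm{FN}}$. Your extra remarks on posterior propriety, the bounded loss, and the approximate nature of the rule under empirical-Bayes plug-in values are consistent with the paper's surrounding discussion.
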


The proof of Proposition~\ref{prop:reporting-decision} is given in
Appendix~\ref{app:proof-reporting}.

In the adaptive implementation, \(p_j(\delta)\) is approximated using \(B\)
main draws:
\[
\widehat p_j(\delta)
=
\frac{1}{B}\sum_{b=1}^{B}
\mathbb I\{|\theta_j^{(b)}|>\delta\}.
\]
We report term \(j\) as active when
\(\widehat p_j(\delta)>0.90\). Because the penalties are adapted during the run
and auxiliary networks are generated by finite ERGM MCMC,
\(\widehat p_j(\delta)\) is an approximate empirical-Bayes reporting summary,
not an exact posterior inclusion probability.

\subsection{Coefficient-grouping results}

The elastic-net penalty can also encourage similar coefficients for terms that
represent similar features. The appropriate notion of similarity depends on
whether one considers the logistic pseudolikelihood or the full ERGM
likelihood.

\paragraph{Pseudolikelihood.}
Let \(m=1,\ldots,M\) index the observed dyads, let
\(x_m=\Delta s_m(\bs y)\) be the corresponding change-statistic vector, and
write
\[
X_j=(x_{1j},\ldots,x_{Mj})^\top
\]
for its \(j\)th design column. Let
\(\widehat\th^{\mathrm{PL}}\) denote the elastic-net penalized
pseudolikelihood estimator defined in Section~\ref{sec:model}.

\begin{proposition}[Pseudolikelihood coefficient grouping]
\label{prop:pl-grouping}
Suppose that \(\lambda_2>0\). If
\(\widehat\theta_j^{\mathrm{PL}}\) and
\(\widehat\theta_k^{\mathrm{PL}}\) are nonzero and have the same sign, then
\[
\left|
\widehat\theta_j^{\mathrm{PL}}
-
\widehat\theta_k^{\mathrm{PL}}
\right|
\le
\frac{\|X_j-X_k\|_1}{2\lambda_2}
\le
\frac{\sqrt M\,\|X_j-X_k\|_2}{2\lambda_2}.
\]
In particular, under these conditions, identical change-statistic columns
yield identical coefficients.
\end{proposition}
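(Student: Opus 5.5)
The approach follows the classical grouping argument of \citet{zou2005regularization}, adapted to the logistic pseudolikelihood. Write the penalized pseudolikelihood objective as
\[
F(\th)=\sum_{m=1}^M\left[y_m x_m^\top\th-\log\{1+\exp(x_m^\top\th)\}\right]-\lambda_1\|\th\|_1-\lambda_2\|\th\|_2^2 ,
\]
where \(y_m\) is the observed value of dyad \(m\). The function \(F\) is strictly concave because of the ridge term with \(\lambda_2>0\), so \(\widehat\th^{\mathrm{PL}}\) exists and is unique. If \(\widehat\theta_j^{\mathrm{PL}}\neq0\), then \(F\) is differentiable in \(\theta_j\) at the optimum, and the first-order condition holds:
\[
\sum_{m=1}^M x_{mj}\bigl(y_m-\widehat p_m\bigr)-\lambda_1\operatorname{sgn}(\widehat\theta_j^{\mathrm{PL}})-2\lambda_2\widehat\theta_j^{\mathrm{PL}}=0,
\qquad
\widehat p_m=\frac{\exp(x_m^\top\widehat\th^{\mathrm{PL}})}{1+\exp(x_m^\top\widehat\th^{\mathrm{PL}})}.
\]
The same condition holds with \(k\) in place of \(j\).

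Next I subtract the two stationarity equations. Because the signs agree, the \(\lambda_1\) terms cancel, which gives
\[
2\lambda_2\bigl(\widehat\theta_j^{\mathrm{PL}}-\widehat\theta_k^{\mathrm{PL}}\bigr)=\sum_{m=1}^M (x_{mj}-x_{mk})(y_m-\widehat p_m).
\]
I then take absolute values and use the triangle inequality. Since \(y_m\in\{0,1\}\) and \(\widehat p_m\in(0,1)\), each residual satisfies \(|y_m-\widehat p_m|<1\), so the right-hand side is at most \(\|X_j-X_k\|_1\). Dividing by \(2\lambda_2\) gives the first inequality. The second inequality is Cauchy--Schwarz, \(\|v\|_1\le\sqrt M\|v\|_2\). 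If \(X_j=X_k\), the bound is zero, so the two coefficients coincide.

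The step needing the most care is justifying the stationarity equations. The \(L_1\) term is nondifferentiable only at zero, so I will argue through subgradients: at a nonzero coordinate the subdifferential of \(|\theta_j|\) is the singleton \(\{\operatorname{sgn}(\theta_j)\}\), which yields the equality above. The same-sign hypothesis is exactly what makes the \(\lambda_1\) contributions cancel. I also need existence of the maximizer, and this follows from coercivity: the ridge term dominates, while the negative log-likelihood term is nonnegative. The uniform residual bound \(|y_m-\widehat p_m|\le1\) is what makes the bound free of the data beyond the design columns.
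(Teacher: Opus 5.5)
Your proposal is correct and takes essentially the same route as the paper: coordinatewise KKT equations at the two nonzero, same-sign coordinates; subtraction to cancel the \(L_1\) terms; the residual bound \(|y_m-\widehat p_m|\le 1\) with the triangle inequality; and Cauchy--Schwarz for the \(\sqrt M\) bound. Your coercivity argument for existence and your subgradient justification of the stationarity equations are details the paper leaves implicit for the pseudolikelihood estimator.
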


The KKT derivation and the corresponding standardized-design calculation are
provided in Appendix~\ref{app:pl-grouping-proof}. This result is a property of
the logistic pseudolikelihood design. It is useful for screening and
initialization but does not describe the full-likelihood posterior targeted by
the sampler.

\paragraph{Full likelihood.}
For the full ERGM likelihood, there is no fixed observed design matrix.
Similarity must instead be measured by how the sufficient statistics differ
over the network sample space. Let \(\widehat\th\) denote the full-likelihood
elastic-net MAP estimator defined in Section~\ref{sec:model}. For two
sufficient statistics \(s_j\) and \(s_k\), define
\[
g_{jk}(\bs u)=s_j(\bs u)-s_k(\bs u)
\]
and its oscillation over \(\mathcal Y\) by
\[
D_{jk}
=
\sup_{\bs u,\bs v\in\mathcal Y}
\left|
g_{jk}(\bs u)-g_{jk}(\bs v)
\right|.
\]

\begin{theorem}[Full-likelihood MAP coefficient grouping]
\label{thm:full-likelihood-grouping-effect}
Suppose that \(\mathcal Y\) is finite,
\(\lambda_1\geq0\), and \(\lambda_2>0\). If
\[
\widehat\theta_j\neq0,\qquad
\widehat\theta_k\neq0,\qquad
\operatorname{sign}(\widehat\theta_j)
=
\operatorname{sign}(\widehat\theta_k),
\]
then
\[
2\lambda_2(\widehat\theta_j-\widehat\theta_k)
=
g_{jk}(\bs y)
-
\mathbb E_{\widehat\th}\{g_{jk}(\bs Y)\},
\]
and consequently
\[
|\widehat\theta_j-\widehat\theta_k|
\le
\frac{D_{jk}}{2\lambda_2}.
\]
In particular, if \(s_j-s_k\) is constant on \(\mathcal Y\), then
\[
\widehat\theta_j=\widehat\theta_k.
\]
\end{theorem}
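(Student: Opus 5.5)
The plan is to work directly from the first-order (KKT) conditions of the full-likelihood elastic-net objective of Theorem~\ref{thm:map-equivalence},
\[
F(\th)=\th^\top s(\bs y)-\log z(\th)-\lambda_1\|\th\|_1-\lambda_2\|\th\|_2^2 .
\]
Since \(\mathcal Y\) is finite, \(\log z(\th)\) is a finite log-sum-exp of linear functions of \(\th\). It is therefore smooth and convex, with
\[
\nabla\log z(\th)=\mathbb E_{\th}\{s(\bs Y)\}.
\]
Consequently \(F\) is strictly concave whenever \(\lambda_2>0\). The ridge term also makes \(F\to-\infty\) as \(\|\th\|\to\infty\), because \(-\log z(\th)\le -\th^\top s(\bs y)\). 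Hence a unique maximizer \(\widehat\th\) exists, and the statement is about well-defined objects.

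\textbf{Stationarity in a nonzero coordinate.} At a nonzero coordinate the absolute value is differentiable, so the partial derivative of \(F\) must vanish:
\[
s_j(\bs y)-\mathbb E_{\widehat\th}\{s_j(\bs Y)\}-\lambda_1\operatorname{sign}(\widehat\theta_j)-2\lambda_2\widehat\theta_j=0.
\]
The same identity holds for coordinate \(k\). Subtracting the two equations, the \(\lambda_1\operatorname{sign}\) terms cancel precisely because the signs agree. Linearity of expectation then gives
\[
2\lambda_2(\widehat\theta_j-\widehat\theta_k)=g_{jk}(\bs y)-\mathbb E_{\widehat\th}\{g_{jk}(\bs Y)\},
\]
which is the first claim. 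When \(\lambda_1=0\) the sign and nonzero hypotheses are not even needed, but they do no harm.

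\textbf{Bound via the oscillation.} The expectation \(\mathbb E_{\widehat\th}\{g_{jk}(\bs Y)\}\) is a convex combination of the values \(g_{jk}(\bs v)\), \(\bs v\in\mathcal Y\). It therefore lies in \([\min g_{jk},\max g_{jk}]\), and so does \(g_{jk}(\bs y)\), since \(\bs y\in\mathcal Y\). Their difference is thus at most \(\max g_{jk}-\min g_{jk}=D_{jk}\) in absolute value, and dividing by \(2\lambda_2\) gives the bound. If \(s_j-s_k\) is constant, then \(D_{jk}=0\) and \(\widehat\theta_j=\widehat\theta_k\).

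\textbf{Main obstacle.} The only delicate point is justifying the differentiated KKT equations. This needs the existence of the maximizer, the differentiability of \(\log z\) with gradient equal to the mean statistic, and the fact that the nonzero hypothesis removes the subgradient ambiguity of \(|\cdot|\). Finiteness of \(\mathcal Y\) makes all three routine.
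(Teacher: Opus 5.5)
Your proposal is correct and follows essentially the same route as the paper's proof. Both arguments establish existence and uniqueness of the maximizer, write the stationarity equations at the two nonzero coordinates using $\nabla\log z(\boldsymbol\theta)=\mathbb E_{\boldsymbol\theta}\{s(\boldsymbol Y)\}$, subtract them so the equal-sign $L_1$ terms cancel, and then bound the difference because both $g_{jk}(\boldsymbol y)$ and $\mathbb E_{\widehat{\boldsymbol\theta}}\{g_{jk}(\boldsymbol Y)\}$ lie in $[\min g_{jk},\max g_{jk}]$. Your explicit coercivity argument, via $\boldsymbol\theta^\top s(\boldsymbol y)-\log z(\boldsymbol\theta)\le 0$, fills in a step the paper only asserts.
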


The proof of
Theorem~\ref{thm:full-likelihood-grouping-effect} is given in
Appendix~\ref{app:full-likelihood-grouping-proof}.

\section{Simulation}
\label{sec:simulation}

This section studies the performance of BERGM Elastic Net when the fitted ERGM
is over-specified and some active covariates are strongly correlated. We
compare the proposed method with Standard BERGM \citep{CaimoFriel2014}, BERGM
Lasso-EB \citep{Han2024adaptiveLasso}, BERGM Ridge Cauchy-EB
\citep{modisette2023penalized}, and Horseshoe BERGM-EB
\citep{linares2026bayesian}. Performance is assessed through estimation error,
thresholded reporting, joint recovery and coefficient separation for the
correlated active pair, empirical interval behavior, posterior-predictive fit,
and computational failures.

The simulation is intended to assess whether elastic-net regularization can
shrink inactive terms while achieving stable estimates of correlated active
effects. Because the five methods use different priors, proposal mechanisms,
tuning rules, and adaptive updates, the results compare the complete
implementations rather than the isolated effect of the elastic-net prior. The
following subsections describe the simulation design, evaluation criteria, and
results.

\subsection{Simulation Design}

We use \(R=50\) Monte Carlo replications, with base seed \(20260511\). In each
replication, an undirected, binary, loopless network is generated on \(n=50\)
nodes. The nodes are assigned to three approximately balanced groups. Two
continuous covariates, \(x_1\) and \(x_2\), are generated from a bivariate
normal distribution with mean zero, marginal variances one, and correlation $
\operatorname{corr}(x_1,x_2)=0.95.$
Five additional covariates, \(z_1,\ldots,z_5\), are generated independently
from standard normal distributions and are included as inactive noise
covariates.

The data-generating ERGM is
\[
\begin{aligned}
Y\sim\;&
\texttt{edges}
+\texttt{nodematch("group")}
+\texttt{nodecov("x1")}
+\texttt{nodecov("x2")}\\
&+\texttt{gwesp}(0.5,\texttt{fixed=TRUE}).
\end{aligned}
\]
The non-edge coefficients are
\[
\begin{aligned}
\theta_{\texttt{nodematch(group)}}&=0.35, &
\theta_{\texttt{nodecov(x1)}}&=0.65,\\
\theta_{\texttt{nodecov(x2)}}&=0.65, &
\theta_{\texttt{gwesp}}&=0.35.
\end{aligned}
\]

The edge coefficient is calibrated to produce an expected network density near
\(0.07\). We consider a grid over \([-8,-1]\) with spacing \(0.25\) and
simulate \(100\) networks at each grid value. Networks with density below
\(0.005\) or above \(0.50\) are classified as degenerate during calibration.
This procedure selects an edge coefficient of \(-3.75\), for which the mean
simulated density is \(0.0717\) to avoid the degeneracy issue.

Each simulated network is fitted with the same over-specified candidate model:
\[
\begin{aligned}
Y\sim\;&
\texttt{edges}+\texttt{nodematch("group")}
+\texttt{nodecov("x1")}+\texttt{nodecov("x2")}\\
&+\texttt{absdiff("x1")}+\texttt{absdiff("x2")}
+\texttt{gwesp}(0.5,\texttt{fixed=TRUE})\\
&+\texttt{gwdsp}(0.5,\texttt{fixed=TRUE})
+\texttt{triangle}+\texttt{kstar(2)}+\texttt{kstar(3)}\\
&+\sum_{k=1}^{5}\texttt{nodecov}(z_k).
\end{aligned}
\]
The candidate model contains \(p=16\) coefficients: the edge term, four active
non-edge terms, and eleven inactive non-edge terms. The active non-edge terms
are group matching, the two correlated nodal covariates \(x_1\) and \(x_2\),
and the geometrically weighted edgewise shared-partner statistic. The edge
coefficient is treated as a density parameter and is excluded from the
estimation and reporting metrics.

All methods use finite MPLE starting values when these are available, with zero
vectors used only as a fallback. Each BERGM fit uses four chains, \(100\)
burn-in iterations, \(2000\) main iterations, \(500\) auxiliary ERGM
iterations, and no additional thinning. For the four comparison methods, the
differential-evolution weight is \(\gamma=0.5\), and the scalar proposal
variance is \(0.0025\).

For BERGM Elastic Net, we use \(\gamma=0.05\) and initialize the penalty
parameters at the initial $
\lambda_1=\lambda_2=5.
$
The proposal covariance is based on the MPLE covariance matrix scaled by
\(0.10\), with its eigenvalues truncated to $
[10^{-8},0.02].
$ For each fitted model, \(100\) networks are generated for
posterior-predictive assessment. Bayesian goodness-of-fit plots for the first
replication are based on \(30\) draws, with \(500\) auxiliary
iterations for each draw. A non-edge term is reported as active when
$
\widehat p_j(0.05)>0.90.
$
\subsection{Evaluation Metrics}

Let \(\mathcal A\) and \(\mathcal N\) denote the sets of active and inactive
non-edge terms, respectively. The edge coefficient is excluded because it is
treated as a density parameter rather than a candidate term for reporting.
For replication \(r\), and coefficient \(j\), let
\(\widehat\theta_{rj}\) denote the posterior mean and let
\(\theta_{0j}\) denote the true coefficient. For any coefficient set
\(\mathcal S\subseteq\mathcal A\cup\mathcal N\), define
\[
\operatorname{MSE}(\mathcal S)
=
\frac{1}{R|\mathcal S|}
\sum_{r=1}^{R}\sum_{j\in\mathcal S}
\left(\widehat\theta_{rj}-\theta_{0j}\right)^2.
\]
We report this quantity for the active terms
\(\mathcal S=\mathcal A\), the inactive terms
\(\mathcal S=\mathcal N\), and all non-edge terms
\(\mathcal S=\mathcal A\cup\mathcal N\). Smaller MSE indicates more accurate
point estimation.

A non-edge term is reported as active in replication \(r\) when
\[
\widehat p_{rj}(0.05)>0.90.
\]
Let
\[
\widehat{\mathcal A}_{r}
=
\left\{
j\in\mathcal A\cup\mathcal N:
\widehat p_{rj}(0.05)>0.90
\right\}
\]
be the set of terms reported by method \(m\). Define
\[
\operatorname{TP}_{r}
=
|\widehat{\mathcal A}_{r}\cap\mathcal A|,
\qquad
\operatorname{FP}_{r}
=
|\widehat{\mathcal A}_{r}\cap\mathcal N|.
\]
The true-positive rate, false-positive rate, and false-discovery rate are
\[
\operatorname{TPR}_{r}
=
\frac{\operatorname{TP}_{r}}{|\mathcal A|},
\qquad
\operatorname{FPR}_{r}
=
\frac{\operatorname{FP}_{r}}{|\mathcal N|},
\qquad
\operatorname{FDR}_{r}
=
\frac{\operatorname{FP}_{r}}
{\max\{|\widehat{\mathcal A}_{r}|,1\}}.
\]
Thus, TPR is the proportion of truly active terms that are reported, FPR is
the proportion of inactive terms that are incorrectly reported, and FDR is
the proportion of reported terms that are inactive. A larger TPR and smaller
FPR and FDR are desirable, although these quantities necessarily depend on
the reporting threshold.

Empirical interval coverage is the proportion of true non-edge coefficients
contained in their \(95\%\) confidence intervals, and interval
length is the corresponding average interval width. Because the sampler
adapts the penalties during the main run, these quantities describe the
empirical behavior of the adaptive summaries; they are not asserted to be
exact fixed-target Bayesian coverage probabilities.

Grouped recovery of the two correlated active covariates is assessed by the
proportions of replications in which both or exactly one is reported, together
with their mean separation,
\[
\left|
\widehat\theta_{\texttt{nodecov(x1)}}
-
\widehat\theta_{\texttt{nodecov(x2)}}
\right|.
\]
Posterior-predictive stability is evaluated using degeneracy rates, simulated
network densities, and Bayesian goodness-of-fit summaries. All reported table
entries are Monte Carlo averages across \(R=50\) replications, with Monte Carlo
standard errors (MCSEs) in parentheses.
\subsection{Simulation Results}

The simulation was designed to examine two potential benefits of elastic-net
regularization in an over-specified ERGM: reducing the effect of inactive terms
and stabilizing the estimates of correlated active terms. The results provide
evidence for both features. BERGM Elastic Net has the smallest estimation
error and the strongest false-positive control, while also giving the most
balanced estimates for the correlated active covariates.

Table~\ref{tab:case1-estimation} reports estimation error, thresholded-reporting
performance, and empirical interval behavior. BERGM Elastic Net has the
smallest active, inactive, and overall non-edge MSE among the five methods. Its
active-term MSE is \(0.105\), compared with \(0.263\)--\(0.309\) for the other
methods. Its inactive-term MSE is \(0.035\), less than one-quarter of the
\(0.163\)--\(0.175\) range obtained by the alternatives. As a result, its
overall non-edge MSE is \(0.054\), compared with \(0.198\)--\(0.202\).
Figure~\ref{fig:case1-mse} shows the same pattern. The reduction in MSE is
therefore not limited to stronger shrinkage of zero coefficients; it is also
present for the active terms.

The thresholded-reporting results show the benefit of stronger noise
suppression, together with its cost. Under the rule
\(\widehat p_j(0.05)>0.90\), BERGM Elastic Net has an FPR of \(0.184\) and an
FDR of \(0.423\), both the smallest among the five methods. Thus, it reports
fewer inactive terms and produces a smaller proportion of false discoveries.
Its TPR is \(0.610\), however, compared with \(0.705\)--\(0.800\) for the
other methods. The method is therefore more conservative under the chosen
cutoff: it improves false-positive control but fails to report some active
terms.

BERGM Elastic Net also has empirical interval coverage \(0.965\), close to the
nominal \(0.95\) level, with an average interval length of \(1.292\). BERGM
Lasso-EB produces longer intervals and lower coverage, while the shorter
intervals from BERGM Ridge Cauchy-EB, Horseshoe BERGM-EB, and Standard BERGM
have substantially lower coverage.

\begin{table}[H]
\centering
\caption{Estimation, thresholded-reporting, and empirical interval results
across \(50\) replications. MSE is mean squared error; TPR, FPR, and FDR are
the true-positive, false-positive, and false-discovery rates. Empirical
coverage and interval length refer to the approximate \(95\%\) intervals from
the adaptive output. Entries are replication averages, with Monte
Carlo standard errors in parentheses. The edge coefficient is excluded.}
\label{tab:case1-estimation}
\resizebox{\textwidth}{!}{%
\begin{tabular}{lcccccccc}
\toprule
Method & MSE active & MSE inactive & MSE overall & TPR & FPR & FDR &
Empirical coverage & Approx. interval length \\
\midrule
BERGM Elastic Net
& 0.105 (0.011) & 0.035 (0.003) & 0.054 (0.004)
& 0.610 (0.024) & 0.184 (0.014) & 0.423 (0.025)
& 0.965 (0.012) & 1.292 (0.021) \\
BERGM Lasso-EB
& 0.309 (0.052) & 0.163 (0.033) & 0.202 (0.032)
& 0.715 (0.029) & 0.307 (0.019) & 0.524 (0.021)
& 0.850 (0.027) & 1.700 (0.051) \\
BERGM Ridge Cauchy-EB
& 0.279 (0.052) & 0.170 (0.029) & 0.199 (0.029)
& 0.750 (0.027) & 0.425 (0.021) & 0.597 (0.014)
& 0.490 (0.035) & 0.606 (0.026) \\
Horseshoe BERGM-EB
& 0.277 (0.048) & 0.175 (0.031) & 0.202 (0.030)
& 0.705 (0.028) & 0.405 (0.020) & 0.600 (0.017)
& 0.410 (0.038) & 0.544 (0.023) \\
Standard BERGM
& 0.263 (0.046) & 0.174 (0.028) & 0.198 (0.027)
& 0.800 (0.028) & 0.418 (0.025) & 0.566 (0.021)
& 0.585 (0.038) & 0.747 (0.023) \\
\bottomrule
\end{tabular}
}
\end{table}

\begin{figure}[H]
    \centering
    \includegraphics[width=0.82\textwidth]
    {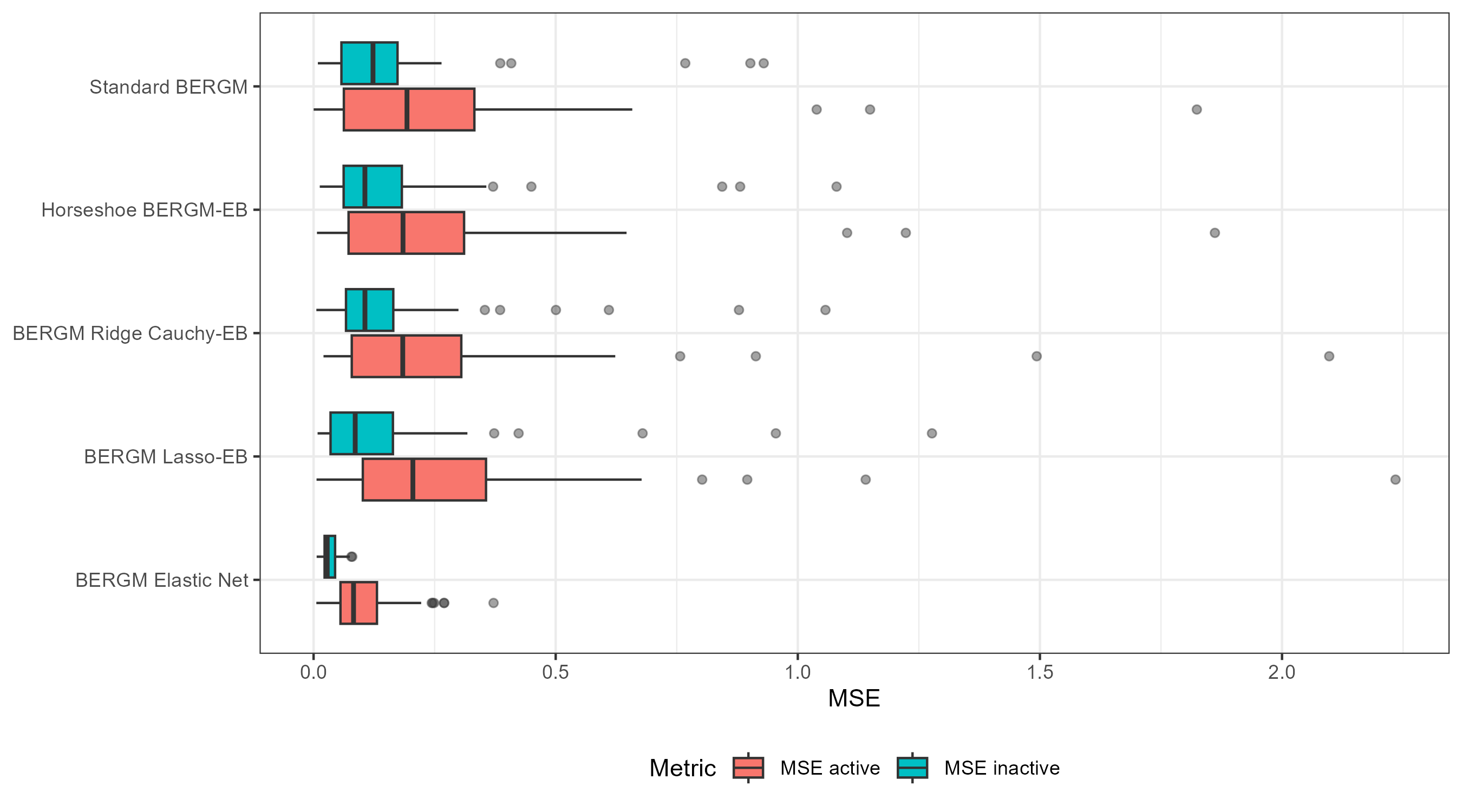}
    \caption{Active, inactive, and overall non-edge MSE across \(50\)
    replications. BERGM Elastic Net has the smallest MSE in all three
    categories.}
    \label{fig:case1-mse}
\end{figure}

MSE and thresholded reporting measure different features of the estimates. An
active coefficient may be estimated accurately even when its 
frequency does not exceed the reporting cutoff. This explains why BERGM
Elastic Net can have the smallest active-term MSE while also having a lower
TPR. Its low inactive-term MSE, FPR, and FDR reflect stronger control of
inactive terms, whereas the lower TPR reflects the cost of the conservative
reporting rule.

The second intended benefit of the elastic net is more stable estimation of
correlated signals. Table~\ref{tab:case1-grouping} examines this property for \(\texttt{nodecov(x1)}\) and \(\texttt{nodecov(x2)}\). BERGM Elastic Net
reports both active covariates in \(84\%\) of the replications and exactly one
in \(16\%\). BERGM Lasso-EB, BERGM Ridge Cauchy-EB, and Horseshoe BERGM-EB
each report both covariates in \(68\%\) of the replications, while Standard
BERGM does so in \(80\%\).

BERGM Elastic Net also gives the smallest coefficient separation,
\(0.522\), compared with \(0.850\)--\(0.991\) for the other methods. The higher paired-reporting rate and smaller coefficient separation indicate that the proposed method distributes the signal more evenly across the two correlated
covariates. This is the empirical behavior expected from combining the \(L_1\) and \(L_2\) components: the \(L_1\) component suppresses inactive
terms, while the \(L_2\) component discourages large differences between
related active coefficients.

\begin{table}[H]
\centering
\caption{Paired reporting and coefficient separation for the correlated active
covariates \(\texttt{nodecov(x1)}\) and \(\texttt{nodecov(x2)}\).
Coefficient separation is the absolute difference between their 
means. Entries are averages across \(50\) replications, with Monte Carlo
standard errors in parentheses.}
\label{tab:case1-grouping}
\resizebox{\textwidth}{!}{%
\begin{tabular}{lccc}
\toprule
Method & Both reported active & Exactly one reported active &
Coefficient separation \\
\midrule
BERGM Elastic Net
& 0.840 (0.052) & 0.160 (0.052) & 0.522 (0.055) \\
BERGM Lasso-EB
& 0.680 (0.067) & 0.320 (0.067) & 0.991 (0.105) \\
BERGM Ridge Cauchy-EB
& 0.680 (0.067) & 0.320 (0.067) & 0.890 (0.114) \\
Horseshoe BERGM-EB
& 0.680 (0.067) & 0.320 (0.067) & 0.898 (0.110) \\
Standard BERGM
& 0.800 (0.057) & 0.200 (0.057) & 0.850 (0.107) \\
\bottomrule
\end{tabular}
}
\end{table}

We also examined whether the additional shrinkage affects
posterior-predictive fit. Table~\ref{tab:case1-bgof} reports goodness-of-fit
diagnostics for the first simulation replication, based on degree,
geodesic-distance, edgewise shared-partner, and triad-census statistics. The
mean proportion of observed statistics within the \(90\%\) predictive bands
ranges from \(0.987\) to \(1.000\), and the MAE and RMSE values are similar
across methods. For this replication, the reduction in estimation error under
BERGM Elastic Net is not accompanied by an evident deterioration in
posterior-predictive fit. Figure~\ref{fig:case1-bgof-en} gives the full
diagnostic display for the BERGM Elastic Net fit. Since these diagnostics use
only the first replication, they are descriptive rather than a general
comparison of predictive performance.

\begin{table}[H]
\centering
\caption{Bayesian goodness-of-fit diagnostics for the first simulation
replication. The proportion within the \(90\%\) predictive band, mean absolute
error (MAE), and root mean squared error (RMSE) are aggregated over degree,
geodesic-distance, edgewise shared-partner, and triad-census statistics.}
\label{tab:case1-bgof}
\begin{tabular}{lccc}
\toprule
Method & Mean within 90\% band & MAE & RMSE \\
\midrule
BERGM Elastic Net       & 0.987 & 0.0039 & 0.0125 \\
BERGM Lasso-EB          & 1.000 & 0.0033 & 0.0109 \\
BERGM Ridge Cauchy-EB   & 1.000 & 0.0034 & 0.0103 \\
Horseshoe BERGM-EB      & 0.993 & 0.0033 & 0.0115 \\
Standard BERGM          & 0.993 & 0.0037 & 0.0116 \\
\bottomrule
\end{tabular}
\end{table}

\begin{figure}[H]
    \centering
    \includegraphics[width=0.86\textwidth]
    {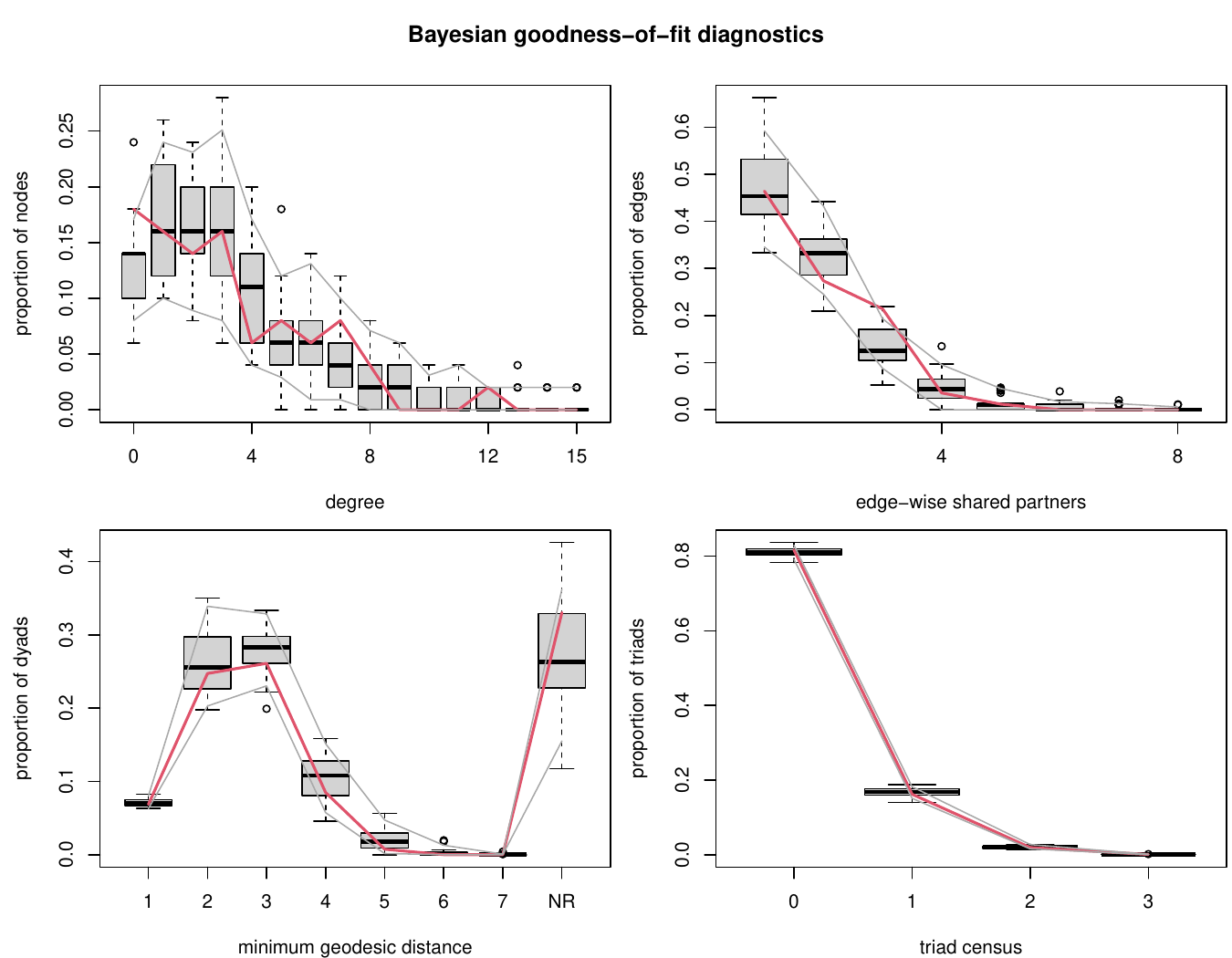}
    \caption{Bayesian goodness-of-fit diagnostics for the BERGM Elastic Net
    fit in the first simulation replication. The observed statistics are
    compared with their posterior-predictive distributions.}
    \label{fig:case1-bgof-en}
\end{figure}

In summary, the results illustrate the main reason for combining the two
penalty components. In this correlated and over-specified ERGM, BERGM Elastic
Net shrinks inactive terms more strongly than the comparison methods while
giving more similar estimates for the correlated active pair. It has the
smallest active, inactive, and overall MSE, the lowest FPR and FDR, the highest
paired-reporting rate, and the smallest coefficient separation. The cost is a
lower TPR under the \(0.90\) reporting cutoff. These findings support the use
of elastic-net regularization when an ERGM contains both correlated signals
and many inactive candidate terms.

\section{Applications}\label{sec:applications}

\subsection{\texttt{faux.magnolia.high} Friendship Network}
\label{subsec:fmh-elastic-net}

We first consider the \texttt{faux.magnolia.high} network, a benchmark school
friendship network distributed with the \texttt{ergm} package and based on
Wave~I of the National Longitudinal Study of Adolescent Health
\citep{goodreau2008statnet}. The network contains \(1{,}461\) students and
\(974\) undirected friendship ties. We use this example to examine two familiar
features of adolescent friendship networks: whether students in the same grade
are more likely to be friends and whether friendships exhibit transitive
closure. Figure~\ref{fig:clustering} displays the network with nodes colored by
grade.

\begin{figure}[H]
    \centering
    \includegraphics[width=0.45\linewidth]
    {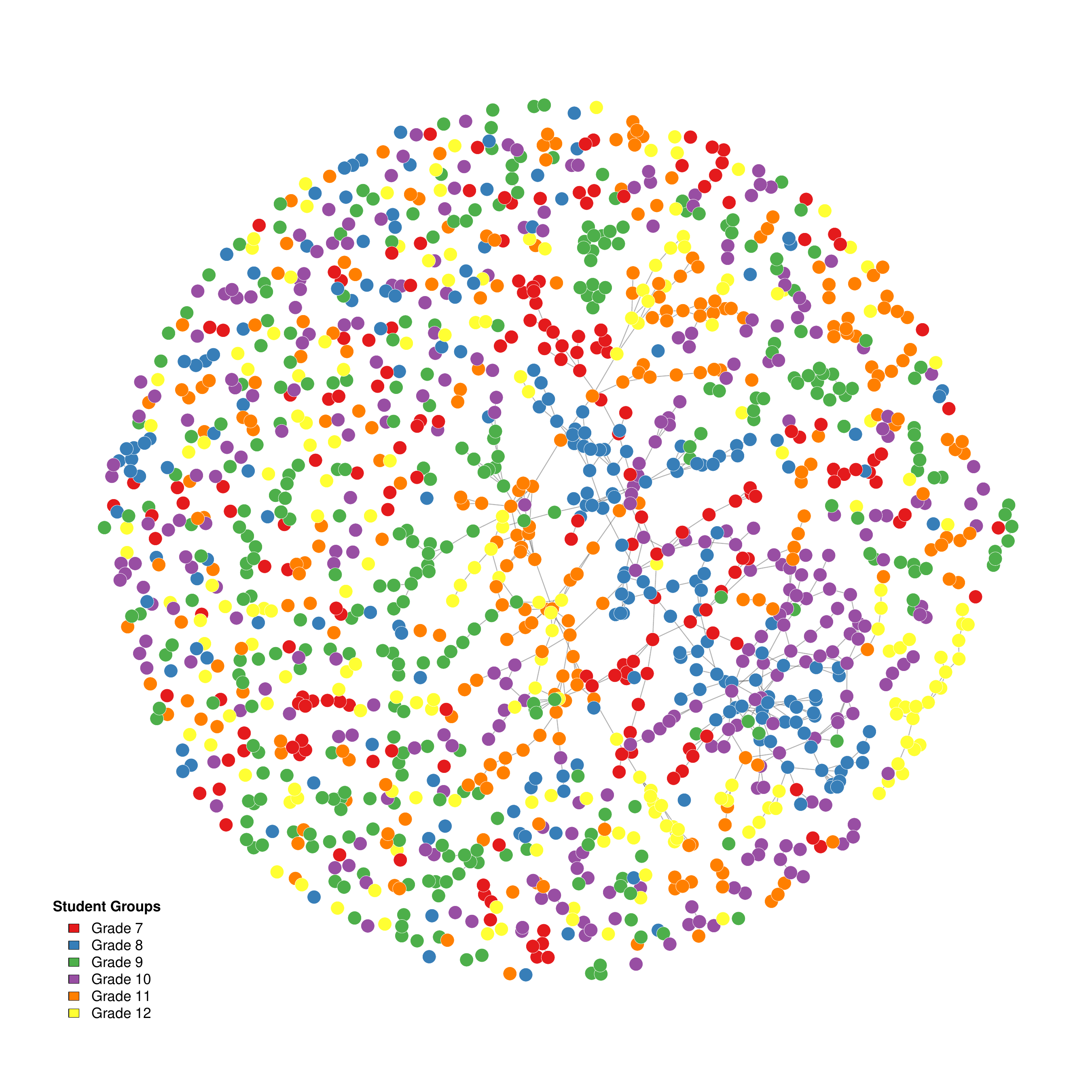}
    \caption{The \texttt{faux.magnolia.high} friendship network, with students
    colored by grade level.}
    \label{fig:clustering}
\end{figure}

Because the network is undirected, its statistics are defined over unordered
dyads. The edge statistic is
\[
\sum_{i<j}y_{ij},
\]
and the grade-matching statistic is
\[
\sum_{i<j}
y_{ij}\mathbb I(\text{Grade}_i=\text{Grade}_j).
\]
Reciprocity is not included because every undirected tie is reciprocal by
definition. Transitive closure is represented by a geometrically weighted
edgewise shared-partner statistic with fixed decay \(0.5\). The fitted model is
\[
Y\sim
\texttt{edges}
+\texttt{nodematch("Grade")}
+\texttt{gwesp}(0.5,\texttt{fixed=TRUE}).
\]

The six-chain population sampler uses base seed \(20260513\), \(20{,}000\)
burn-in iterations, \(5{,}000\) main iterations, and \(5{,}000\) auxiliary
ERGM iterations per exchange update. We set the differential-evolution weight
to \(\gamma=0.05\), initialize the penalty parameters at
\(\lambda_1=\lambda_2=5\), and use a scalar proposal variance of \(0.0025\).
The maximum pseudolikelihood estimate is used only to initialize the chains;
the sampler targets the full ERGM likelihood through the approximate exchange
updates. The Bayesian goodness-of-fit analysis uses \(500\)
posterior-predictive simulations and \(1{,}000\) auxiliary iterations.

Table~\ref{tab:fmh-elastic-net-posterior} gives the result.
The edge coefficient is strongly negative, reflecting the low baseline
propensity for friendship after conditioning on grade matching and shared
partners. The grade-matching and \texttt{gwesp} coefficients are both positive,
and their \(95\%\) confidence intervals lie well above zero.

\begin{table}[H]
\centering
\caption{Summaries for the BERGM Elastic Net fit to the
\texttt{faux.magnolia.high} network. The fitted model includes
\(\texttt{edges}\), \(\texttt{nodematch("Grade")}\), and
\(\texttt{gwesp}(0.5,\texttt{fixed=TRUE})\). The final two columns give the
empirical \(2.5\%\) and \(97.5\%\) quantiles.}
\label{tab:fmh-elastic-net-posterior}
\begin{tabular}{lrrrrr}
\toprule
Term & Mean & Median & SD & 2.5\% & 97.5\% \\
\midrule
\texttt{edges}
& -8.951 & -8.948 & 0.102 & -9.163 & -8.755 \\
\texttt{nodematch("Grade")}
& 3.068 & 3.067 & 0.119 & 2.846 & 3.309 \\
\texttt{gwesp.fixed.0.5}
& 2.359 & 2.334 & 0.302 & 1.862 & 2.998 \\
\bottomrule
\end{tabular}
\end{table}

The estimated grade-matching coefficient has mean \(3.068\).
Holding the other change statistics fixed, a one-unit increase in the
grade-matching change statistic multiplies the conditional odds of a tie by
$
\exp(3.068)=21.50.
$
Thus, the fitted model indicates a strong tendency toward friendships between
students in the same grade. The mean of the
\texttt{gwesp}(0.5) coefficient is \(2.359\), corresponding to a conditional
odds multiplier of
$
\exp(2.359)=10.58
$
for a one-unit increase in the associated change statistic. This positive
effect indicates that a potential friendship is more likely when the two
students share friends, after accounting for network density and grade
matching.

Figures~\ref{fig:fmh-plot-bergm} and \ref{fig:fmh-bgof} show the trace plots
and posterior-predictive checks. The main chains show the numerical
behavior of the adaptive fit, while the goodness-of-fit diagnostics compare
the observed network with networks generated from the fitted model. Taken
together, the coefficient summaries and predictive checks support a simple
description of this network: friendships are sparse overall, but ties are more
likely between students in the same grade and between students embedded in
shared-friend structures.

\begin{figure}[H]
    \centering
    \includegraphics[width=\textwidth]{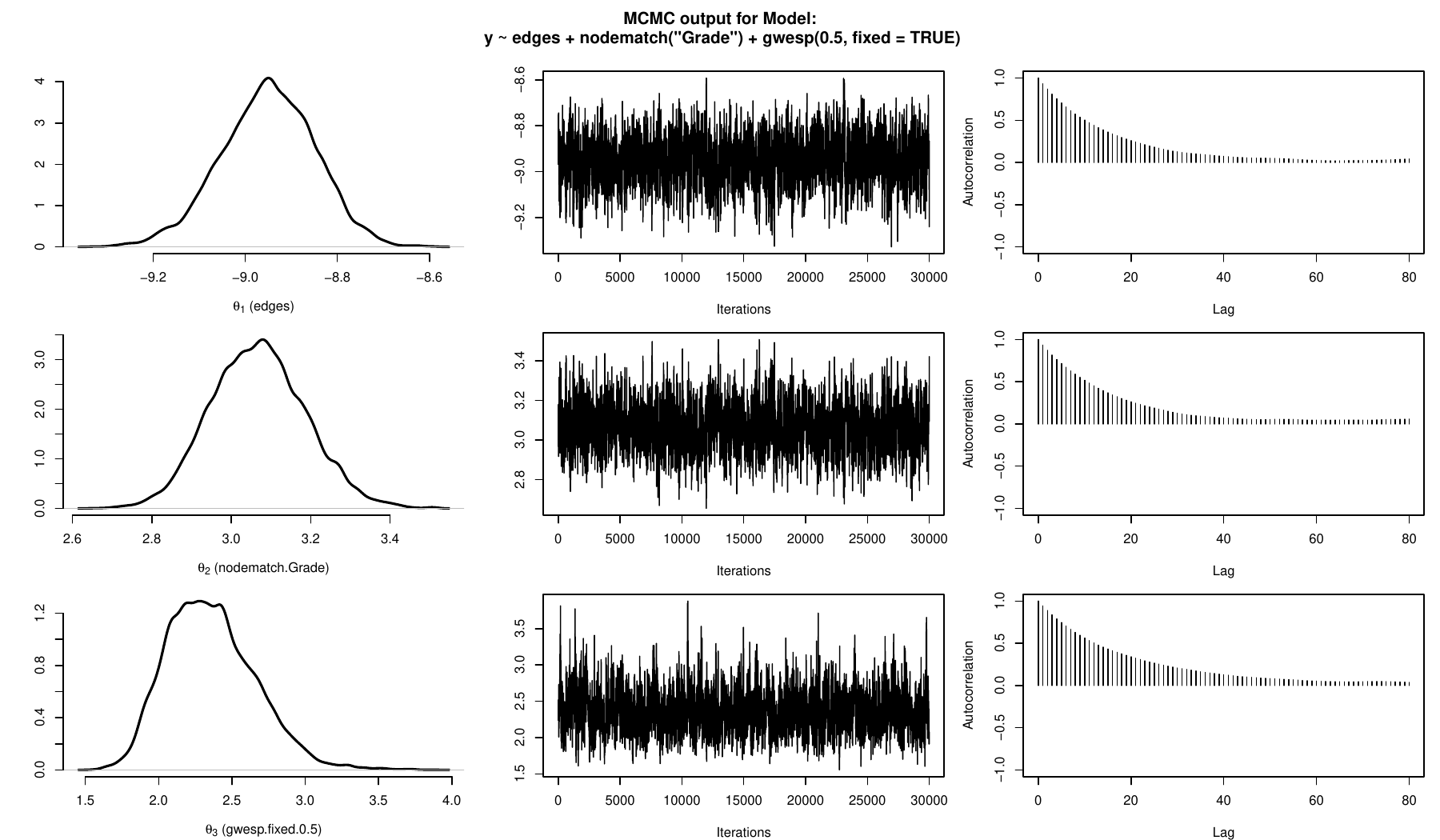}
    \caption{MCMC diagnostics for the Bayesian elastic-net BERGM fit to \texttt{faux.magnolia.high}.}
    \label{fig:fmh-plot-bergm}
\end{figure}

\begin{figure}[H]
    \centering
    \includegraphics[width=0.86\textwidth]{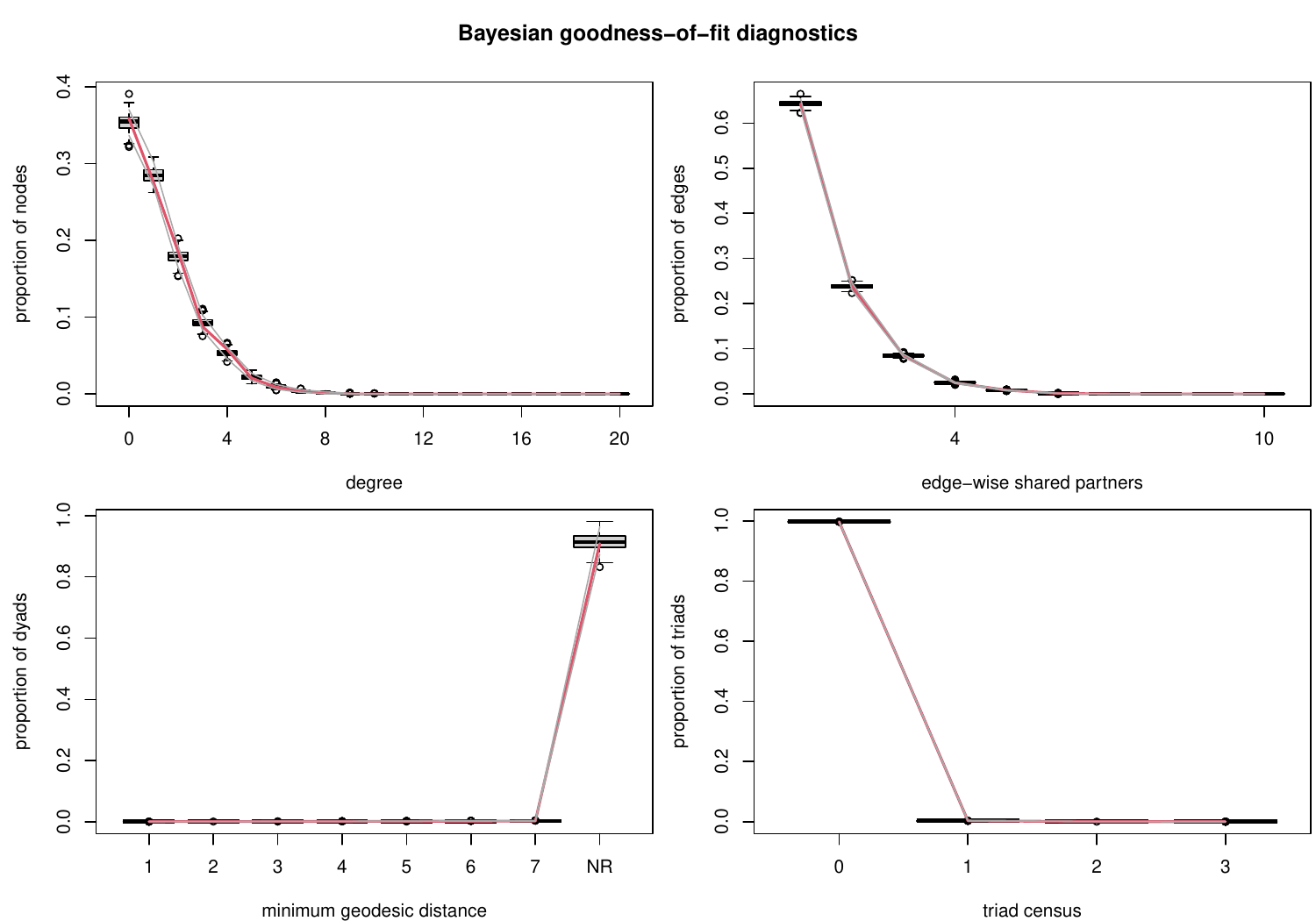}
    \caption{Bayesian goodness-of-fit diagnostics for the Bayesian elastic-net BERGM fit to \texttt{faux.magnolia.high}.}
    \label{fig:fmh-bgof}
\end{figure}

\subsection{OpenAlex AI Article Citation Network}
\label{subsec:openalex-ai}

Our second application considers a directed citation network of recent
artificial-intelligence articles from OpenAlex
\citep{PriemPiwowarOrr2022OpenAlex,OpenAlexDevelopers2026}. Within a selected
high-activity subgraph, we examine whether citation ties are associated with
topic and country similarity, local closure, target prominence, and source
reference breadth after accounting for directed degree heterogeneity. The
analysis illustrates the proposed BERGM Elastic Net on a directed network
substantially larger than those considered above.

\paragraph{Data and network construction}

\begin{figure}[h]
    \centering
    \includegraphics[width=\textwidth]
    {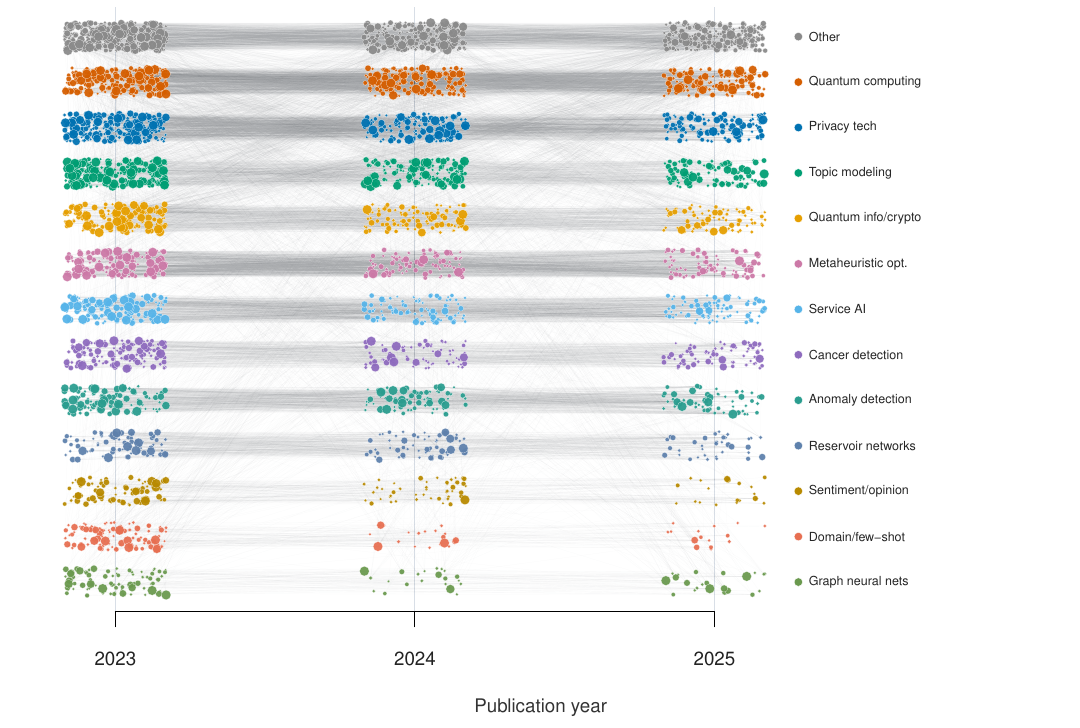}
    \caption{The OpenAlex AI citation network used in the analysis. The graph
    contains \(4{,}705\) articles and \(18{,}186\) directed citation ties.
    Nodes are colored by topic and scaled by within-source citation activity
    (\(\texttt{internal\_degree}\)), computed before subgraph selection;
    arrowheads are omitted for readability.}
    \label{fig:openalex-network}
\end{figure}

The data were downloaded from the OpenAlex Works API on June 19, 2026. The
source pool contains \(224{,}500\) articles published from 2023 through 2025
whose primary topic belongs to the OpenAlex Artificial Intelligence subfield
1702 and that had been cited by at least one OpenAlex Work
(\(\texttt{cited\_by\_count}>0\)). A directed edge \(i\to j\) indicates that
article \(i\) cites article \(j\).

Three article-level counts are used in constructing or modeling the network.
First, for each article we count the citations it sends to and receives from
other articles in the source pool (\(\texttt{internal\_degree}\)). Thus,
\(\texttt{internal\_degree}\) is the sum of the article's in-degree and
out-degree within the \(224{,}500\)-article source graph. Second, we record the
total number of OpenAlex Works that cite the article, including Works outside
the source pool (\(\texttt{cited\_by\_count}\)). This is a global citation
count at the date of extraction and is not the article's in-degree in the
fitted network. Third, we count the Work identifiers in the article's OpenAlex
reference list (\(\texttt{referenced\_work\_count}\)). This analysis variable
measures reference-list breadth as recorded by OpenAlex; it is not the
article's out-degree in the fitted network and need not equal the length of its
complete bibliography.

Among the source articles, \(141{,}176\) have at least one within-sample
citation, yielding \(344{,}816\) directed ties. Because this graph is too large
for repeated auxiliary simulation, we rank articles by
\(\texttt{internal\_degree}+\log\{1+\texttt{cited\_by\_count}\}\), induce a
subgraph on the top \(5{,}000\). The resulting network contains \(4{,}705\) articles and \(18{,}186\)
ties, with density \(0.00082\).

The remaining article attributes are publication year, lead country, and
primary topic. We have the 12 most frequent primary topics and combine the
remaining topics into the category \texttt{Other}. Missing lead-country values
are coded as \texttt{Unknown}. Figure~\ref{fig:openalex-network} displays the
analytic network.

\paragraph{Model specification}
The fitted ERGM includes an edge term, same-topic and same-country mixing,
topic-specific activity effects, absolute publication-year difference,
fixed-decay shared-partner closure, directed in- and out-degree controls,
target citation prominence, and source reference-list breadth. The edge and
matching terms describe baseline sparsity and citation concentration within
topic and country groups. Topic-activity effects are measured relative to
\texttt{Other}, and absolute publication-year difference is included as a
cross-sectional control. A fixed-decay geometrically weighted shared-partner
term represents local closure, while in- and out-degree count terms for degrees
\(1,\ldots,7\) control the marginal degree distributions. For details, see Table \ref{tab:openalex-key-effects}.

For a citation \(i\to j\), target citation prominence is the log-transformed
global number of OpenAlex Works citing article \(j\),
\(\log\{1+\texttt{cited\_by\_count}_j\}\). Source reference breadth is the
log-transformed number of OpenAlex Work identifiers in the reference list of
article \(i\),
\(\log\{1+\texttt{referenced\_work\_count}_i\}\). The first covariate describes
the global prominence of the cited article, whereas the second describes the
reference-list breadth of the citing article.

The adaptive fit used 20 chains, \(20{,}000\) burn-in iterations,
\(5{,}000\) main iterations, and \(20{,}000\) auxiliary ERGM iterations.
Table~\ref{tab:openalex-key-effects} reports
empirical-Bayes summaries.

\paragraph{Main findings}
The estimates portray a sparse but highly organized citation network. Against
the strongly negative edge baseline, topic similarity is the most pronounced
dyadic association. Holding the remaining change statistics fixed, changing a
dyad from cross-topic to same-topic multiplies its conditional citation odds by
\(\exp(3.013)=20.35\). This association remains large after allowing topics to
differ in their overall citation activity, indicating that it reflects
within-topic concentration rather than merely the greater activity of
particular fields. Citations are also geographically concentrated, although
less strongly: sharing the same lead country multiplies the conditional odds
by \(\exp(0.712)=2.04\).

\begin{table}[H]
\centering
\caption{Adaptive empirical-Bayes summaries for the OpenAlex
citation network. The 2.5\% and 97.5\% columns are empirical
quantiles. Topic-activity effects are measured relative to \texttt{Other};
degree-count terms control the marginal directed degree distributions.}
\label{tab:openalex-key-effects}
\footnotesize
\resizebox{\textwidth}{!}{%
\begin{tabular}{lrrrrr}
\toprule
Term & Mean & Median & SD & 2.5\% & 97.5\% \\
\midrule
\texttt{edges} & -14.665 & -14.655 & 0.242 & -15.121 & -14.189 \\
Same topic group & 3.013 & 3.013 & 0.060 & 2.899 & 3.132 \\
\addlinespace
\multicolumn{6}{l}{\textit{Topic-activity effects
(reference category: \texttt{Other})}} \\
Advanced Graph Neural Networks & 0.545 & 0.543 & 0.122 & 0.314 & 0.788 \\
AI in cancer detection & 0.187 & 0.186 & 0.071 & 0.050 & 0.320 \\
AI in Service Interactions & 0.024 & 0.024 & 0.067 & -0.102 & 0.155 \\
Anomaly Detection Techniques and Applications & 0.229 & 0.228 & 0.070 & 0.086 & 0.365 \\
Domain Adaptation and Few-Shot Learning & 0.306 & 0.308 & 0.115 & 0.074 & 0.539 \\
Metaheuristic Optimization Algorithms Research & 0.046 & 0.047 & 0.063 & -0.072 & 0.171 \\
Neural Networks and Reservoir Computing & 0.329 & 0.330 & 0.093 & 0.152 & 0.507 \\
Privacy-Preserving Technologies in Data & 0.137 & 0.138 & 0.055 & 0.026 & 0.240 \\
Quantum Computing Algorithms and Architecture & 0.039 & 0.040 & 0.053 & -0.067 & 0.141 \\
Quantum Information and Cryptography & 0.329 & 0.329 & 0.069 & 0.198 & 0.464 \\
Sentiment Analysis and Opinion Mining & 0.367 & 0.362 & 0.118 & 0.144 & 0.597 \\
Topic Modeling & 0.131 & 0.132 & 0.065 & 0.008 & 0.267 \\
\addlinespace
Same lead country & 0.712 & 0.709 & 0.060 & 0.598 & 0.830 \\
Absolute publication-year difference & 0.293 & 0.292 & 0.039 & 0.218 & 0.370 \\
Fixed-decay GWESP & 2.614 & 2.615 & 0.051 & 2.514 & 2.711 \\
\addlinespace
\multicolumn{6}{l}{\textit{Out-degree controls}} \\
\texttt{odegree(1)} & -1.407 & -1.404 & 0.094 & -1.589 & -1.227 \\
\texttt{odegree(2)} & -2.070 & -2.067 & 0.124 & -2.319 & -1.830 \\
\texttt{odegree(3)} & -2.037 & -2.032 & 0.152 & -2.340 & -1.751 \\
\texttt{odegree(4)} & -1.933 & -1.929 & 0.164 & -2.275 & -1.633 \\
\texttt{odegree(5)} & -1.513 & -1.508 & 0.157 & -1.840 & -1.213 \\
\texttt{odegree(6)} & -1.248 & -1.247 & 0.156 & -1.551 & -0.932 \\
\texttt{odegree(7)} & -0.789 & -0.789 & 0.122 & -1.027 & -0.552 \\
\addlinespace
\multicolumn{6}{l}{\textit{In-degree controls}} \\
\texttt{idegree(1)} & -0.334 & -0.333 & 0.100 & -0.533 & -0.144 \\
\texttt{idegree(2)} & -0.579 & -0.575 & 0.130 & -0.833 & -0.330 \\
\texttt{idegree(3)} & -0.698 & -0.697 & 0.133 & -0.942 & -0.429 \\
\texttt{idegree(4)} & -0.697 & -0.698 & 0.148 & -0.979 & -0.415 \\
\texttt{idegree(5)} & -0.509 & -0.510 & 0.151 & -0.800 & -0.218 \\
\texttt{idegree(6)} & -0.295 & -0.303 & 0.157 & -0.593 & 0.004 \\
\texttt{idegree(7)} & -0.150 & -0.140 & 0.139 & -0.424 & 0.104 \\
\addlinespace
Target citation prominence & 0.652 & 0.651 & 0.027 & 0.600 & 0.705 \\
Source reference breadth & 0.543 & 0.544 & 0.043 & 0.458 & 0.624 \\
\bottomrule
\end{tabular}%
}
\end{table}

The positive fixed-decay GWESP coefficient, \(2.614\), shows that citation
structure is not explained by observed article attributes alone. Ties tend to
occur within overlapping citation neighborhoods even after controlling for
topic, country, article activity, and directed degree heterogeneity. The
selected network therefore exhibits both substantive proximity and local
relational closure. Because the GWESP change statistic is nonlinear, however,
its coefficient is not an odds ratio for one additional shared partner.

The two article-level covariates describe opposite sides of a potential
citation \(i\to j\), where \(i\) is the citing article and \(j\) is the cited
article. Doubling \(1+\texttt{cited\_by\_count}_j\) multiplies the conditional
odds that \(i\) cites \(j\) by \(2^{0.652}=1.57\), holding the remaining change
statistics fixed. Doubling
\(1+\texttt{referenced\_work\_count}_i\) gives a corresponding multiplier of
\(2^{0.543}=1.46\). Thus, citation ties tend to point toward articles with
greater global citation prominence and to originate from articles with broader
OpenAlex-recorded reference lists. These are conditional associations within
the selected subgraph, not causal effects.

The topic-activity estimates further illustrate the role of elastic-net
regularization. Relative to \texttt{Other}, nine of the twelve topic groups
have confidence intervals entirely above zero. The clearest positive
contrast is Advanced Graph Neural Networks, with mean \(0.545\) and interval
\([0.314,0.788]\). In contrast, the intervals for AI in Service Interactions,
Metaheuristic Optimization Algorithms Research, and Quantum Computing
Algorithms and Architecture include zero, and their estimates remain close to the baseline. The fit therefore preserves pronounced topic-activity contrasts
while attenuating weaker ones. Because the prior is continuous, this pattern represents shrinkage rather than exact Bayesian selection, and it should not be interpreted as a ranking of AI research areas. Finally, the degree-count terms jointly account for residual heterogeneity in
the in- and out-degree distributions. Since a single edge toggle can change several of these statistics simultaneously, their coefficients are not interpreted separately.

\begin{figure}[h]
    \centering
    \includegraphics[width=0.88\textwidth]
    {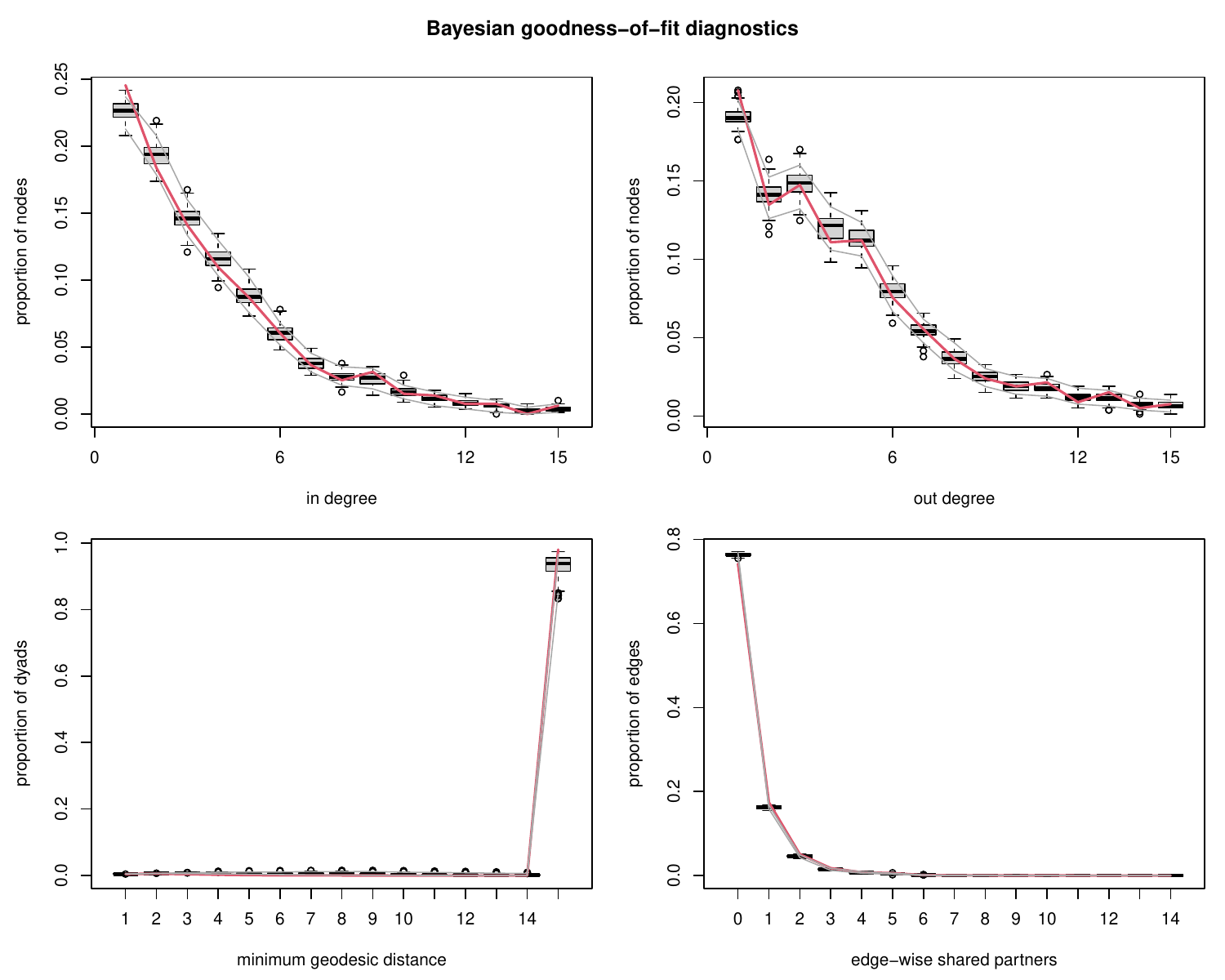}
    \caption{Approximate posterior-predictive diagnostics for the OpenAlex AI
    citation network based on \(1{,}000\) simulated networks. The panels
    compare observed and simulated in-degree, out-degree, minimum geodesic
    distance, and edgewise shared-partner summaries.}
    \label{fig:openalex-bgof}
\end{figure}

\paragraph{Predictive assessment.}
Figure~\ref{fig:openalex-bgof} compares the observed in-degree, out-degree,
geodesic distance, and edgewise shared-partner summaries with those from
\(1{,}000\) networks simulated using the model's estimation. The
displayed panels show no gross discrepancy for these coarse structural
features.

\section{Discussion}
\label{sec:discussion}

The results suggest that elastic-net regularization is particularly useful when an ERGM contains correlated signals together with many weak candidate terms. In the simulation, the proposed workflow substantially reduces estimation error and false-positive reporting while producing more similar estimates for the correlated active pair. This improvement comes with lower sensitivity under the conservative reporting rule, illustrating the central tradeoff
between suppressing noise and detecting weak signals. The two applications show that the same framework can be used in both small undirected networks and large directed networks with complex structural and nodal effects.

Future work may extend the framework to dynamic, multilayer, and valued networks and consider more flexible regularization structures for related network statistics. Further developments in scalable computation, uncertainty quantification, and model-selection theory would broaden the applicability of regularized Bayesian ERGMs to increasingly large and complex network data.

\section*{Data availability}

The \texttt{faux.magnolia.high} network is publicly available
through the \texttt{ergm} package, and the OpenAlex source records are publicly accessible through the OpenAlex Works API.  Analysis code and supporting materials are available in the Github xxxxxx.

\section*{Declaration of competing interest}

The authors declare that they have no known competing financial interests or personal relationships that could have appeared to influence the work reported in this paper.

\bibliographystyle{elsarticle-harv}
\bibliography{Bibliography-MM-MC.bib}

\section{Appendix}
\subsection{Proof of Lemma~\ref{lemma:normalizing_constant}}

The normalizing constant is determined by
\[
1
=
C(\lambda_1,\lambda_2)
\int_{-\infty}^{\infty}
\exp\left\{-\lambda_1|\theta_j|-\lambda_2\theta_j^2\right\}\,d\theta_j .
\]
By symmetry,
\[
\int_{-\infty}^{\infty}
\exp\left\{-\lambda_1|\theta_j|-\lambda_2\theta_j^2\right\}\,d\theta_j
=
2\int_0^\infty
\exp\left\{-\lambda_1\theta_j-\lambda_2\theta_j^2\right\}\,d\theta_j .
\]
Completing the square gives
\[
\lambda_2\theta_j^2
+
\lambda_1\theta_j
=
\lambda_2\left(\theta_j+\frac{\lambda_1}{2\lambda_2}\right)^2
-
\frac{\lambda_1^2}{4\lambda_2}.
\]
Hence
\[
\begin{aligned}
2\int_0^\infty
\exp\left\{-\lambda_1\theta_j-\lambda_2\theta_j^2\right\}\,d\theta_j
&=
2\exp\left\{\frac{\lambda_1^2}{4\lambda_2}\right\}
\int_0^\infty
\exp\left\{-\lambda_2\left(\theta_j+\frac{\lambda_1}{2\lambda_2}\right)^2\right\}d\theta_j .
\end{aligned}
\]
With
\[
u=\lambda_2\left(\theta_j+\frac{\lambda_1}{2\lambda_2}\right)^2,
\qquad
d\theta_j=\frac{1}{2\sqrt{\lambda_2}}u^{-1/2}\,du,
\]
the lower limit is \(u=\lambda_1^2/(4\lambda_2)\). Therefore
\[
\begin{aligned}
2\int_0^\infty
\exp\left\{-\lambda_1\theta_j-\lambda_2\theta_j^2\right\}\,d\theta_j
&=
\frac{1}{\sqrt{\lambda_2}}
\exp\left\{\frac{\lambda_1^2}{4\lambda_2}\right\}
\Gamma_U\left(\frac12,\frac{\lambda_1^2}{4\lambda_2}\right).
\end{aligned}
\]
Solving for \(C(\lambda_1,\lambda_2)\) gives
\[
C(\lambda_1,\lambda_2)
=
\sqrt{\lambda_2}
\exp\left\{-\frac{\lambda_1^2}{4\lambda_2}\right\}
\Gamma_U^{-1}\left(\frac12,\frac{\lambda_1^2}{4\lambda_2}\right),
\]
as claimed.

\subsection{Estimation of \texorpdfstring{$\lambda_1$ and $\lambda_2$}{lambda1 and lambda2}}

Up to an additive constant independent of \((\lambda_1,\lambda_2)\), the
complete-data log-likelihood is
\begin{align}
\ell_c(\lambda_1,\lambda_2 \mid \bs y,\th,\bs t)
&=
p\log \lambda_1
-p\log \Gamma_U\!\left(\frac12,\frac{\lambda_1^2}{4\lambda_2}\right)
-\lambda_2\sum_{j=1}^p \frac{t_j}{t_j-1}\theta_j^2
-\frac{\lambda_1^2}{4\lambda_2}\sum_{j=1}^p t_j .
\label{eq:complete_loglik_lambda_clean}
\end{align}

At the \(k\)th EM iteration, let
\[
\lambda^{(k-1)}
=
\bigl(\lambda_1^{(k-1)},\lambda_2^{(k-1)}\bigr).
\]
The corresponding \(Q\)-function is
\begin{align}
&Q(\lambda_1,\lambda_2\mid \lambda^{(k-1)})
=
\mathbb E\!\left[
\ell_c(\lambda_1,\lambda_2 \mid \bs y,\th,\bs t)
\;\middle|\;
\bs y,\lambda^{(k-1)}
\right]
\notag\\
&=
p\log \lambda_1
-
p\log \Gamma_U\!\left(
\frac12,\frac{\lambda_1^2}{4\lambda_2}
\right)
-\lambda_2\,
\mathbb E\!\left[
\sum_{j=1}^p \frac{t_j}{t_j-1}\theta_j^2
\;\middle|\;
\bs y,\lambda^{(k-1)}
\right]
\notag\\
&\quad
-\frac{\lambda_1^2}{4\lambda_2}\,
\mathbb E\!\left[
\sum_{j=1}^p t_j
\;\middle|\;
\bs y,\lambda^{(k-1)}
\right].
\label{eq:Q_function_lambda_clean}
\end{align}

For notational convenience, define
\begin{align}
A^{(k-1)}
&=
\mathbb E\!\left[
\sum_{j=1}^p \frac{t_j}{t_j-1}\theta_j^2
\;\middle|\;
\bs y,\lambda^{(k-1)}
\right],
\label{eq:A_kminus1}
\\
B^{(k-1)}
&=
\mathbb E\!\left[
\sum_{j=1}^p t_j
\;\middle|\;
\bs y,\lambda^{(k-1)}
\right].
\label{eq:B_kminus1}
\end{align}
Further, define
\begin{align}
h(\lambda_1,\lambda_2)
&=
\frac{\exp\{-\delta(\lambda_1,\lambda_2)\}}
{\Gamma_U\!\left(\frac12,\delta(\lambda_1,\lambda_2)\right)
\sqrt{\delta(\lambda_1,\lambda_2)}},
\label{eq:h_definition}
\\
\delta(\lambda_1,\lambda_2)
&=
\frac{\lambda_1^2}{4\lambda_2}.
\end{align}
Since
\[
\frac{\partial}{\partial x}\Gamma_U\!\left(\frac12,x\right)
=
-x^{-1/2}e^{-x},
\]
the score equations are
\begin{align}
\frac{\partial Q}{\partial \lambda_1}
&=
\frac{p}{\lambda_1}
+
p\,\frac{\lambda_1}{2\lambda_2}\,
h(\lambda_1,\lambda_2)
-
\frac{\lambda_1}{2\lambda_2}\,B^{(k-1)}=0,
\label{eq:score_lambda1_clean}
\\[1mm]
\frac{\partial Q}{\partial \lambda_2}
&=
-
p\,\frac{\lambda_1^2}{4\lambda_2^2}\,
h(\lambda_1,\lambda_2)
-
A^{(k-1)}
+
\frac{\lambda_1^2}{4\lambda_2^2}\,B^{(k-1)}=0.
\label{eq:score_lambda2_clean}
\end{align}

In practice, the expectations in \eqref{eq:A_kminus1}--\eqref{eq:B_kminus1}
are not available in closed form and are replaced by Monte Carlo averages.
Suppose
\[
\bigl\{(\th^{(m)},\bs t^{(m)})\bigr\}_{m=1}^M
\sim
\pi(\th,\bs t\mid \bs y,\lambda^{(k-1)}).
\]
Then define
\begin{align}
A_M
&=
\frac{1}{M}\sum_{m=1}^M
\sum_{j=1}^p
\frac{t_j^{(m)}}{t_j^{(m)}-1}\bigl(\theta_j^{(m)}\bigr)^2,
\\
B_M
&=
\frac{1}{M}\sum_{m=1}^M
\sum_{j=1}^p t_j^{(m)},
\\
\delta
&=
\frac{\lambda_1^2}{4\lambda_2},
\\
h
&=
\frac{e^{-\delta}}
{\Gamma_U\!\left(\frac12,\delta\right)\sqrt{\delta}}.
\end{align}
Accordingly, the Monte Carlo approximation of the \(Q\)-function is
\begin{align}
\widehat Q_M(\lambda_1,\lambda_2)
&=
p\log\lambda_1
-p\log \Gamma_U\!\left(\frac12,\delta\right)
-\lambda_2 A_M
-\frac{\lambda_1^2}{4\lambda_2}B_M.
\label{eq:Qhat_clean}
\end{align}
The corresponding Monte Carlo score equations are
\begin{align}
\frac{\partial \widehat Q_M}{\partial \lambda_1}
&=
\frac{p}{\lambda_1}
+p\,\frac{\lambda_1}{2\lambda_2}\,h
-\frac{\lambda_1}{2\lambda_2}B_M
=0,
\label{eq:Qhat_score_lambda1}
\\[1mm]
\frac{\partial \widehat Q_M}{\partial \lambda_2}
&=
-p\,\frac{\lambda_1^2}{4\lambda_2^2}\,h
-A_M
+\frac{\lambda_1^2}{4\lambda_2^2}B_M
=0.
\label{eq:Qhat_score_lambda2}
\end{align}
As above, combining \eqref{eq:Qhat_score_lambda1} and
\eqref{eq:Qhat_score_lambda2} yields the closed-form Monte Carlo update
\begin{equation}
\lambda_2^{(k)}
=
\frac{p}{2A_M},
\label{eq:lambda2_MC_update_clean}
\end{equation}
while \(\lambda_1^{(k)}\) is obtained by solving
\begin{equation}
\frac{p}{\lambda_1}
+p\,\frac{\lambda_1}{2\lambda_2^{(k)}}\,h
-\frac{\lambda_1}{2\lambda_2^{(k)}}B_M
=0
\label{eq:lambda1_MC_equation_clean}
\end{equation}
numerically.

Therefore, each MCEM iteration proceeds as follows.

\medskip
\noindent
\textbf{E-step.}
Sample
\[
(\th,\bs t)\mid \bs y,\lambda^{(k-1)}
\]
from the current posterior distribution and compute the Monte Carlo summaries
\(A_M\) and \(B_M\).

\medskip
\noindent
\textbf{M-step.}
Update \(\lambda_2^{(k)}\) by \eqref{eq:lambda2_MC_update_clean}, and then solve
\eqref{eq:lambda1_MC_equation_clean} numerically to obtain
\(\lambda_1^{(k)}\).

\medskip

This procedure yields empirical Bayes estimates of the two elastic-net
hyperparameters while preserving the hierarchical Bayesian structure of the
model.

\subsection{Posterior propriety}\label{app:posterior propriety}

\begin{theorem}[Posterior propriety]
\label{thm:posterior-propriety}
Fix \(\lambda_1>0\) and \(\lambda_2>0\), and suppose that the ERGM sample space
\(\mathcal Y\) is finite. Then for every observed network
\(\bs y\in\mathcal Y\), the posterior density
\[
\pi(\th\mid \bs y,\lambda_1,\lambda_2)
\propto
\frac{\exp\{\th^\top s(\bs y)\}}{z(\th)}
\exp\{-\lambda_1\|\th\|_1-\lambda_2\|\th\|_2^2\}
\]
is proper.
\end{theorem}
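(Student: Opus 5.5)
The plan is to show that the unnormalized posterior is nonnegative, measurable, and has a finite, strictly positive integral over \(\mathbb R^p\). The natural device is the bound already noted in the main text: on a finite sample space the likelihood is a probability mass function, so \(\pi(\bs y\mid\th)\le 1\) for all \(\th\). This dominates the unnormalized posterior by the prior kernel, and Lemma~\ref{lemma:normalizing_constant} shows that kernel is integrable.

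\textbf{Step 1: the likelihood is a probability.} For every \(\th\in\mathbb R^p\), the normalizing constant \(z(\th)=\sum_{\bs y'\in\mathcal Y}\exp\{\th^\top s(\bs y')\}\) is a finite sum of strictly positive terms. Hence \(0<z(\th)<\infty\). Because the sum includes the term \(\bs y'=\bs y\), we get \(\exp\{\th^\top s(\bs y)\}\le z(\th)\), so
\[
0<\pi(\bs y\mid\th)\le 1 .
\]
The map \(\th\mapsto\pi(\bs y\mid\th)\) is also continuous, hence measurable, since it is a ratio of finite sums of exponentials.

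\textbf{Step 2: an integrable upper bound.} The previous step gives
\[
\frac{\exp\{\th^\top s(\bs y)\}}{z(\th)}\,e^{-\lambda_1\|\th\|_1-\lambda_2\|\th\|_2^2}\le \prod_{j=1}^p e^{-\lambda_1|\theta_j|-\lambda_2\theta_j^2}.
\]
By Tonelli's theorem, the integral of the right-hand side over \(\mathbb R^p\) equals \(C(\lambda_1,\lambda_2)^{-p}\). By Lemma~\ref{lemma:normalizing_constant}, this equals
\[
\bigl[\lambda_2^{-1/2}e^{\delta}\,\Gamma_U(1/2,\delta)\bigr]^p<\infty,\qquad \delta=\frac{\lambda_1^2}{4\lambda_2}.
\]
This quantity is finite because \(\Gamma_U(1/2,\delta)\le\Gamma(1/2)=\sqrt\pi\). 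An alternative route avoids the lemma: bound the kernel by \(e^{-\lambda_2\|\th\|_2^2}\), a Gaussian kernel with finite integral \((\pi/\lambda_2)^{p/2}\).

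\textbf{Step 3: positivity of the marginal likelihood.} The integrand is strictly positive everywhere on \(\mathbb R^p\), so
\[
\pi(\bs y\mid\lambda_1,\lambda_2)=\int \pi(\bs y\mid\th)\pi(\th)\,d\th>0.
\]
Combined with the upper bound from Step 2, the normalizing constant lies in \((0,\infty)\). Therefore the posterior is a proper density.

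No step is a serious obstacle. The only point needing care is the uniform bound \(\pi(\bs y\mid\th)\le 1\), which holds for all \(\th\) including directions of near-degeneracy. It relies solely on the finiteness of \(\mathcal Y\), which guarantees that \(z(\th)\) is finite and contains the observed term. Positivity of the marginal likelihood also deserves an explicit remark, because it rules out a degenerate zero normalizer.
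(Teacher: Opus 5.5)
Your proposal is correct and follows essentially the same route as the paper: you use finiteness of \(\mathcal Y\) to get \(0<\pi(\bs y\mid\th)\le 1\), dominate the unnormalized posterior by the elastic-net prior kernel (integrable by Lemma~\ref{lemma:normalizing_constant}), and use strict positivity of the integrand to conclude that the marginal likelihood lies in \((0,\infty)\). Your extra details (measurability, the explicit value \(C(\lambda_1,\lambda_2)^{-p}\), and the alternative Gaussian domination) are all valid refinements of the same argument.
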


\begin{proof}
The elastic-net prior is proper by
Lemma~\ref{lemma:normalizing_constant}. For every \(\th\),
\(\pi(\bs y\mid\th)\) is a probability mass function evaluated at \(\bs y\),
so \(0<\pi(\bs y\mid\th)\le1\). Hence the marginal likelihood is finite and
strictly positive:
\[
0<
\int \pi(\bs y\mid\th)\pi(\th\mid\lambda_1,\lambda_2)\,d\th
\le
\int \pi(\th\mid\lambda_1,\lambda_2)\,d\th
=1.
\]
\end{proof}

\subsection{Proof of the Bayes reporting rule}
\label{app:proof-reporting}

\begin{proof}[Proof of Proposition~\ref{prop:reporting-decision}]
For a fixed-hyperparameter posterior, the posterior risk of reporting term
\(j\) as active is
\[
R(d_j=1\mid\bs y)
=
c_{\mathrm{FP}}\{1-p_j(\delta)\}.
\]
The posterior risk of not reporting it is
\[
R(d_j=0\mid\bs y)
=
c_{\mathrm{FN}}p_j(\delta).
\]
The Bayes action reports the term when
\[
c_{\mathrm{FP}}\{1-p_j(\delta)\}
<
c_{\mathrm{FN}}p_j(\delta).
\]
Rearranging this inequality gives
\[
p_j(\delta)
>
\frac{c_{\mathrm{FP}}}
     {c_{\mathrm{FP}}+c_{\mathrm{FN}}}.
\]
At equality, both actions have the same posterior risk.

For the cutoff used in the simulation,
\[
0.90
=
\frac{c_{\mathrm{FP}}}
     {c_{\mathrm{FP}}+c_{\mathrm{FN}}},
\]
which is equivalent to
\[
c_{\mathrm{FP}}=9c_{\mathrm{FN}}.
\]
\end{proof}

\subsection{Derivation of the pseudolikelihood grouping bound}
\label{app:pl-grouping-proof}

The elastic-net penalized pseudolikelihood estimator minimizes
\[
Q_{\mathrm{PL}}(\th)
=
\sum_{m=1}^{M}
\left[
\log\{1+\exp(x_m^\top\th)\}
-
y_mx_m^\top\th
\right]
+
\lambda_1\|\th\|_1
+
\lambda_2\|\th\|_2^2.
\]
Define
\[
\mu_m(\th)
=
\frac{\exp(x_m^\top\th)}
     {1+\exp(x_m^\top\th)}.
\]
For a nonzero coordinate
\(\widehat\theta_j^{\mathrm{PL}}\), the KKT equation is
\[
\sum_{m=1}^{M}
\{\mu_m(\widehat\th^{\mathrm{PL}})-y_m\}x_{mj}
+
\lambda_1
\operatorname{sign}(\widehat\theta_j^{\mathrm{PL}})
+
2\lambda_2\widehat\theta_j^{\mathrm{PL}}
=
0.
\]
If coordinates \(j\) and \(k\) are nonzero and have the same sign, subtracting
their KKT equations eliminates the \(L_1\) terms and gives
\[
2\lambda_2
\left(
\widehat\theta_j^{\mathrm{PL}}
-
\widehat\theta_k^{\mathrm{PL}}
\right)
=
\sum_{m=1}^{M}
\{y_m-\mu_m(\widehat\th^{\mathrm{PL}})\}
(x_{mj}-x_{mk}).
\]
Since
\[
|y_m-\mu_m(\widehat\th^{\mathrm{PL}})|\leq1,
\]
we obtain
\[
2\lambda_2
\left|
\widehat\theta_j^{\mathrm{PL}}
-
\widehat\theta_k^{\mathrm{PL}}
\right|
\leq
\sum_{m=1}^{M}|x_{mj}-x_{mk}|
=
\|X_j-X_k\|_1.
\]
Therefore,
\[
\left|
\widehat\theta_j^{\mathrm{PL}}
-
\widehat\theta_k^{\mathrm{PL}}
\right|
\leq
\frac{\|X_j-X_k\|_1}{2\lambda_2}.
\]
The Cauchy--Schwarz inequality further gives
\[
\|X_j-X_k\|_1
\leq
\sqrt M\,\|X_j-X_k\|_2,
\]
and hence
\[
\left|
\widehat\theta_j^{\mathrm{PL}}
-
\widehat\theta_k^{\mathrm{PL}}
\right|
\leq
\frac{\sqrt M\,\|X_j-X_k\|_2}{2\lambda_2}.
\]
This proves Proposition~\ref{prop:pl-grouping}.

If the two design columns are centered and standardized so that
\[
\|X_j\|_2^2=\|X_k\|_2^2=M
\]
and their sample correlation is \(r_{jk}\), then
\[
X_j^\top X_k=Mr_{jk}.
\]
Consequently,
\[
\begin{aligned}
\|X_j-X_k\|_2^2
&=
\|X_j\|_2^2+\|X_k\|_2^2-2X_j^\top X_k \\
&=
2M(1-r_{jk}).
\end{aligned}
\]
Thus, high positive correlation between the pseudolikelihood design columns
tightens the coefficient bound. This calculation applies to the
pseudolikelihood approximation and not directly to the full ERGM likelihood.

\subsection{Proof of the full-likelihood grouping theorem}
\label{app:full-likelihood-grouping-proof}

\begin{proof}[Proof of
Theorem~\ref{thm:full-likelihood-grouping-effect}]
Consider the full-likelihood elastic-net objective
\[
\Phi(\th)
=
\log z(\th)-\th^\top s(\bs y)
+
\lambda_1\|\th\|_1
+
\lambda_2\|\th\|_2^2.
\]
Because \(\mathcal Y\) is finite, \(\log z(\th)\) is finite, smooth, and
convex. Since \(\lambda_2>0\), the objective \(\Phi\) is
\(2\lambda_2\)-strongly convex and coercive. It therefore has a unique
minimizer \(\widehat\th\).

Differentiating the log-partition function gives
\[
\frac{\partial}{\partial\theta_j}\log z(\th)
=
\mathbb E_{\th}\{s_j(\bs Y)\}.
\]
For a nonzero MAP coordinate \(\widehat\theta_j\), the KKT equation is
\[
\mathbb E_{\widehat\th}\{s_j(\bs Y)\}
-
s_j(\bs y)
+
\lambda_1\operatorname{sign}(\widehat\theta_j)
+
2\lambda_2\widehat\theta_j
=
0.
\]
Similarly, for coordinate \(k\),
\[
\mathbb E_{\widehat\th}\{s_k(\bs Y)\}
-
s_k(\bs y)
+
\lambda_1\operatorname{sign}(\widehat\theta_k)
+
2\lambda_2\widehat\theta_k
=
0.
\]
Because the two coordinates have the same sign, subtracting these equations
eliminates the \(L_1\) terms. It follows that
\[
2\lambda_2(\widehat\theta_j-\widehat\theta_k)
=
\{s_j(\bs y)-s_k(\bs y)\}
-
\mathbb E_{\widehat\th}
\{s_j(\bs Y)-s_k(\bs Y)\},
\]
or, equivalently,
\[
2\lambda_2(\widehat\theta_j-\widehat\theta_k)
=
g_{jk}(\bs y)
-
\mathbb E_{\widehat\th}\{g_{jk}(\bs Y)\}.
\]

Since \(\mathcal Y\) is finite, both
\(g_{jk}(\bs y)\) and
\(\mathbb E_{\widehat\th}\{g_{jk}(\bs Y)\}\) lie in the interval
\[
\left[
\min_{\bs u\in\mathcal Y}g_{jk}(\bs u),
\,
\max_{\bs u\in\mathcal Y}g_{jk}(\bs u)
\right].
\]
Therefore,
\[
\left|
g_{jk}(\bs y)
-
\mathbb E_{\widehat\th}\{g_{jk}(\bs Y)\}
\right|
\leq
D_{jk}.
\]
Dividing by \(2\lambda_2\) yields
\[
|\widehat\theta_j-\widehat\theta_k|
\leq
\frac{D_{jk}}{2\lambda_2}.
\]
If \(s_j-s_k\) is constant on \(\mathcal Y\), then \(D_{jk}=0\), and hence
\[
\widehat\theta_j=\widehat\theta_k.
\]
\end{proof}

\end{document}